\documentclass[11pt]{article}
\usepackage{hyperref}
\textwidth 150mm \textheight 210mm \oddsidemargin 5mm
\evensidemargin 5mm \topmargin 0pt
\parindent 6mm
\normalsize
\usepackage{amsfonts, amsthm, amsgen}

\usepackage[latin1]{inputenc}

\usepackage{authblk}
\usepackage{amssymb,latexsym,amsmath}
\usepackage{amsthm}

\newtheorem{theorem}{Theorem}[section]
\newtheorem{proposition}[theorem]{Proposition}

\newtheorem{lemma}[theorem]{Lemma}
\newtheorem{corollary}[theorem]{Corollary}

\newtheorem{example}[theorem]{Example}

\newtheorem{remark}[theorem]{Remark}

\newtheorem{definition}[theorem]{Definition}

\newcommand{\F}{\mathbb{F}}
\newcommand{\N}{\mathbb{N}}
\newcommand{\Z}{\mathbb{Z}}

\newcommand{\gen}{\mathrm{gen}}


\newcommand{\mC}{\mathcal{C}}

\newcommand{\Fq}{\mathbb{F}_{q}}

\newcommand{\Fqm}{\mathbb{F}_{q^{m}}}
\newcommand{\dual}{C^{\perp}}

\newcommand{\dc}{\mathrm{def}\left(C\right)}
\newcommand{\defect}{\mathrm{def}}
\newcommand{\chara}{\mathrm{char}}
\newcommand{\Fqmn}{\mathbb{F}_{q^{m}}^{n}}
\newcommand{\MrC}{\mathcal{M}_{r}\left(C\right)}

\newcommand{\m}{\mathcal{M}}

\newcommand{\mA}{\mathcal{A}}

\newcommand{\mM}{\mathcal{M}}

\newcommand{\dist}{\mbox{\rm d}}
\newcommand{\MRD}{\mbox{\rm MRD}}

\DeclareMathOperator{\rank}{rank}

\DeclareMathOperator{\rdef}{Rdef}

\title{\bf On dually almost MRD codes }

\author{ Javier de la Cruz\footnote{
This work was done while J. de la Cruz was at the University of Zurich supported by the Swiss Confederation through the Swiss Government Excellence Scholarship no. 2016.0873. The autor was partially supported by COLCIENCIAS through project no. 121571250178.} \\Universidad del Norte, Barranquilla, Colombia\\  and \\ University of Zurich,  Switzerland}


\begin{document}

\maketitle

\begin{abstract}  In this paper we define and study a family of codes which come close to be MRD codes, so we call them AMRD codes (almost MRD).
 An AMRD code is a code with rank defect equal to 1. AMRD codes whose duals are AMRD are called dually AMRD. Dually AMRD codes are the closest to
the MRD codes given that both they and their dual codes are almost optimal. Necessary and sufficient conditions for the codes to be dually AMRD are given.
Furthermore we show that dually AMRD codes and  codes of rank defect one and maximum 2-generalized weight coincide when the size of the matrix divides the dimension.
  \end{abstract}

\section{Introduction}

Rank metric codes have cryptographic applications and applications in tape recording. Recently it was
shown how to employ them for error correction in coherent linear network coding (\cite{KK1}, \cite{metrics} \cite{metrics3}).
Due to these applications, there is a steady stream of work that focuses on general properties of codes with rank metric.

There exist two representations of rank metric codes: matrix representation and vector representation. In matrix representation linear rank metric codes are $\F_q$-linear subspaces of $ (\F_q)_{n,m}$, where the norm of an element $A\in  (\F_q)_{n,m}$ is defined as the rank of the matrix. In vector representation, rank metric codes are $\Fqm$-linear subspaces of the vector space $\Fqmn$, where the norm of a vector $v\in \Fqmn$ is defined as the maximal number of coordinates of $v$ which are linearly independent over $\Fq$.

An MRD code is a rank metric code which is maximal in size given the minimum
distance, or in other words it achieves the Singleton bound for the rank metric distance.
Delsarte \cite{Delsarte} and independently Gabidulin \cite{Gabidulin} proved the existence of $\F_q$-linear MRD codes for all $q,m, n$ and dimension
$1\leq t \leq mn$ divisible by $m$. Given the parameters $q,m, n, k$, the code $C\leq \Fqmn$, they describe has a particular construction through a generator matrix $M_k(v)$ and is called in the literature the Gabidulin code $\mathcal{G}$. Recently new constructions of MRD codes have been found which are not equivalent to Gabidulin codes $\mathcal{G}$ (\cite{Willems-DelaCruz-K-W,sheekey}).

In analogy with the Singleton defect for classical
codes, in \cite{DC-G-R-L} the authors propose a definition of rank defect for $\F_q$-linear rank metric codes. The rank defect
of a code $\mC\leq (\F_q)_{n,m}$ measures how far $\mC$ is away from being a MRD code. Based on this
concept a QMRD code is defined in \cite{DC-G-R-L} as an $\F_q$-linear code $\mC\leq (\F_q)_{n,m}$ with rank defect 0 and which is not MRD, i.e $\mC$ has rank defect 0 and $m \nmid t$.

In this work we define and study a family of codes which come close to be MRD codes, called dually almost MRD codes or simply dually AMRD codes.
This paper is structured as follows. In Section 2 we give the preliminaries on rank metric codes, rank defect, QMRD codes, rank distribution and generalized weights.
In Section 3 we present the definition of dually AMRD codes, we give necessary and sufficient conditions for its existence based on the parameters $n,m,t$ and $d$, moreover we establish its existence for the case $m=t$.

Using the rank distribution in Section 4 we give sufficient and necessary conditions for the code to be dually AMRD.
In particular we establish a relationships between the number of vectors of minimal weight of the original code $\mC$ and its dual code $\mC^\perp$,
which guarantees that the code is dually AMRD. We also analyze the self-dual AMRD codes.

Finally, in Section 5 we study the generalized weights for $\F_q$-linear codes. In this part we establish relations between the generalized weights of an $\F_q$-linear code
and its rank defect. We also prove that, when $m$ divides the dimension $t$, the concept of dually AMRD for $\F_q$-linear codes coincides with the concept of 2-AMRD for $\F_q$-linear codes (or near almost codes).

\section{Preliminaries}
Let $\F_q$ denote a finite field with $q$ elements and let $V=(\F_q)_{n,m}$ be the $\F_q$-vector space of
matrices over $\F_q$ of type $(n,m)$. On $V$ we define the so-called rank metric distance by $ \dist(A,B) =\rank(A-B) $
for $A,B \in V$.

A $t$-dimensional $\F_q$-subspace $\mathcal{\mC} \leq V$ endowed with the metric $\dist$ is called a {\it $\F_q$-linear rank metric} code with minimum distance $\dist(\mathcal{C})= \min\,\{ \dist(A,B)\mid A \not= B \in \mC \}.$
Clearly, the minimum distance of a code $\mC \neq \{ 0\}$ is also
$$d(\mC):=\min \{ \rank(A): A \in \mC, \ A\neq 0 \}.$$
Similarly, as in the classical coding theory, the rank distribution of $\mC$ is the collection ${(A_i(\mC))}_{i \in \N}$,
where $A_i(\mC):=|\{ A\in \mC: \rank(A)=i\}|$ for $i \in \N$.
The dual of a code $\mC \leq V$ is the code
$$\mC^\perp:=\{ B \in (\F_q)_{n,m} : \mbox{Tr}(BA^t)=0 \mbox{
for all } A \in \mC\}.$$
A code $\mC\leq V$ is \textit{self-dual} provided $\mC=\mC^\perp.$ As $\dim_{\F_q}(\mC)+\dim_{\F_q}(\mC^\perp)=\dim_{\F_q} V=nm$ a self-dual code $\mC$ has dimension $\frac{nm}{2}.$

 The field $\F_{q^m}$ may be viewed as an $m$-dimensional vector space over $\F_q$. The
{\it rank} of a vector
$v=(v_1,\dots,v_n) \in \F_{q^m}^n$ is defined as the maximum number of coordinates in $v$ that are linearly independent over $\F_q$, i.e.
$\rank(v):= \dim_{\F_q}\langle v_1,\dots,v_n \rangle$. Then we have a rank metric distance given by $ \dist(v,u) =\rank(v-u) $
for $v,u \in \F_{q^m}^n$.
An $\F_{q^m}$-linear subspace $C \leq \F_{q^m}^n$ of dimension $k$ endowed with this metric is called an {\it $\Fqm$-linear rank-metric} $[n,k]$ code.
The minimum distance
of a code $C \neq \{ 0\}$ is $$d(C):=\min \{ \rank(v): v \in C, \ v \neq 0
\}.$$
A code $C \leq \F_{q^m}^n$ is \textit{self-dual} provided $C=C^\perp$, where $C^\perp$ is defined with respect to the standard inner product of $\Fqmn.$

Let $v=(v_1,\dots,v_n) \in \F_{q^m}^n$, and let $\mathcal{B}=\{ \gamma_1,\dots,\gamma_m\}$
be a basis of $\F_{q^m}$ over $\F_q$. The {\it $\F_q$-linear  code associated}
 to an $\Fqm$-linear code $C \leq \F_{q^m}^n$ with respect to the basis $\mathcal{B}$ is
$$\lambda_{\mathcal{B}}(C):= \{ \lambda(v): v\in C\},$$ where $ \lambda(v)=(\lambda_{i,j})\in(\F_q)_{n, m}$ is the matrix such that $v_i= \sum_{j=1}^m \lambda_{i, j} \gamma_j$ for all $i=1, \ldots, n$.

It is well known that the rank distributions of $C$ and $\lambda_{\mathcal{B}}(C)$ agree and $\dim_{\F{_q}}(\lambda_{\mathcal{B}}(C))=m \cdot \dim_{\F_{q^m}}(C).$ In general $ C^\perp \neq \lambda_{\mathcal{B}}(C)^\perp $, however it has been shown in~\cite{Alberto} that their rank distributions also agree.

\begin{theorem}[\cite{Delsarte}, Theorem 5.6] \textsc{(Singleton-bound)} \label{singleton-bound for Delsarte codes}
Let $\mC\leq(\F_q)_{n,m}$ be an $\F_q$-linear code of dimension $t$ with minimum distance $d$. Then we have
$$d \leq \min\{ n-t/m+1, m-t/m+1\}.$$

In particular if $C \leq \Fqmn$ is an $\F_{q^m}$-linear code of dimension $k$, then $$d\leq \min\{n-k+1, \frac{m}{n}(n-k)+1 \}.$$
\end{theorem}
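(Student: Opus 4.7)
The plan is to prove the matrix bound first by the classical puncturing argument, and then deduce the vector bound from it using the basis-expansion map $\lambda_{\mathcal{B}}$ recalled in the preliminaries.

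For the matrix part, I would consider the $\F_q$-linear projection $\pi_c:(\F_q)_{n,m}\to (\F_q)_{n,m-d+1}$ obtained by deleting the last $d-1$ columns of a matrix (the bound is trivial if $d>m$). If $A\in\mC$ satisfies $\pi_c(A)=0$, then $A$ is supported on at most $d-1$ columns, so $\rank(A)\le d-1$; but $d(\mC)=d$ forces $A=0$. Hence $\pi_c|_{\mC}$ is injective and
\[
t=\dim_{\F_q}\mC\le n(m-d+1),\qquad\text{i.e.}\qquad d\le m-t/n+1.
\]
Running the symmetric row-deletion argument (equivalently, transposing and repeating) gives $t\le m(n-d+1)$, i.e.\ $d\le n-t/m+1$. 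Combining the two inequalities gives the first statement.

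For the $\F_{q^m}$-linear case I would pass from $C\le\Fqmn$ to the associated matrix code $\lambda_{\mathcal{B}}(C)\le(\F_q)_{n,m}$. Fix any $\F_q$-basis $\mathcal{B}$ of $\F_{q^m}$. By the facts recalled in the preliminaries, $\lambda_{\mathcal{B}}(C)$ has the same rank distribution as $C$ (hence the same minimum distance $d$) and $\F_q$-dimension $t=m\cdot k$. Substituting $t=mk$ into the two bounds proved above yields
\[
d\le n-\frac{mk}{m}+1=n-k+1\quad\text{and}\quad d\le m-\frac{mk}{n}+1=\frac{m(n-k)}{n}+1,
\]
which is exactly the second assertion.

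There is no serious obstacle. The only observation needing attention is the structural fact that a matrix whose support lies in $\le d-1$ columns has rank $\le d-1$; this is what forces $\pi_c|_{\mC}$ to be injective and drives the whole argument. Everything else — the symmetry between rows and columns, and the transfer from $C$ to $\lambda_{\mathcal{B}}(C)$ — is bookkeeping using preservation of the rank distribution and the factor $m$ in the $\F_q$-dimension.
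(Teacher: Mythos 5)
Your argument is correct and complete: the paper itself gives no proof of this statement (it is imported verbatim as Theorem 5.6 of Delsarte's paper), and your column/row-deletion (puncturing) argument is precisely the standard proof of that result, so there is nothing to object to. The injectivity of $\pi_c|_{\mC}$ (resp.\ $\pi_r|_{\mC}$) from the fact that a matrix supported on at most $d-1$ columns (resp.\ rows) has rank at most $d-1$ is exactly the right key step, and the transfer to the $\F_{q^m}$-linear case via $\lambda_{\mathcal{B}}$, using $t=mk$ and preservation of the rank distribution, is sound. One point worth making explicit: what you actually prove is $d\le\min\{\,n-t/m+1,\ m-t/n+1\,\}$, with $t/n$ in the second term, whereas the displayed statement in the paper has $m-t/m+1$. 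Your version is the correct one --- it is the form that specializes (via $t=mk$) to $d\le\frac{m}{n}(n-k)+1$ in the ``in particular'' clause, and the paper's $m-t/m+1$ is false without the standing hypothesis $n\le m$ (e.g.\ $\mC=(\F_q)_{2,1}$ has $t=2$, $d=1$, but $m-t/m+1=0$); under $n\le m$ it is a weaker consequence of your bound. So your ``combining the two inequalities gives the first statement'' should be read as giving the corrected statement, which implies the one printed.
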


Rank metric codes meeting the Singleton bound are called {\it Maximum Rank Distance} (MRD) codes. Delsarte was the first who proved in \cite{Delsarte} the existence of linear $\MRD$ codes.

Given a vector $v=(v_1, \ldots, v_n) \in \Fqmn$ we denote by $M_k(v) \in (\F_{q^m})_{k,n}$ the matrix
$$ M_k(v)=\left(
            \begin{array}{cccc}
              v_1 & v_2 & \ldots & v_n \\
              v_1^{[1]} & v_2^{[1]} & \ldots & v_n^{[1]} \\
               &  & \vdots &  \\
              v_1^{[k-1]} & v_2^{[k-1]} & \ldots & v_n^{[k-1]} \\
            \end{array}
          \right),
$$ where $[i]:=q^i.$

Gabidulin showed in \cite{Gabidulin} that if $v_1, \ldots, v_n$ are linearly independent over $\F_q$,
then the $\Fqm$-linear code $C\leq \Fqmn$ generated by the matrix $M_k(v_1, \ldots, v_n)$ is a $k$-dimensional MRD code and we call it the {\it Gabidulin code $\mathcal{G}$ generated by $M_k(v_1, \ldots, v_n)$.}

\begin{remark} {\rm Throughout the paper, $d$ and $d^\perp$ denote the minimum distance of the code and its dual respectively. Furthermore, in this work we assume $n\leq m$, therefore $d \leq n-t/m+1 $ for an $\F_q$-linear code of dimension $t$ and $d \leq n-k+1$ for an $\Fqm$-linear code of dimension $k$. Also, unless stated otherwise, we will only consider non-trivial codes, i.e. $\{0\}\neq \mC \neq  (\F_q)_{n,m}$ and $\{0\}\neq C \neq  \Fqm^n$.}

\end{remark}

In analogy with the Singleton defect for classical codes given in \cite{Faldum-Willems},
we have the following definition for the rank defect of rank metric codes.

\begin{definition}[\cite{DC-G-R-L}]\label{rkdef-Delsarte}  {\rm
The {\it rank defect} of an $\F_q$-linear code $\mC \leq (\F_q)_{n,m}$ is defined by
$$\rdef(\mC)=n-\left\lceil\frac{t}{m}\right\rceil+1-d.$$
 If $C \leq \Fqmn$ is an $\Fqm$-linear $[n,k, d]$ code, then the rank defect of $C$ is defined as the defect of the associated code $\lambda(C)$,  i.e.
$\rdef(C)=n-k+1-d$.
}

\end{definition}
Note that $\rdef(\mC)=0$ if $\mC$ is MRD. However, $\rdef(\mC)$ may be zero also for codes $\mC$ which are not MRD. These codes are the closest codes to the MRD codes and are called quasi MRD codes (see \cite{DC-G-R-L}). Concretely, we have the following definition.

\begin{definition}[\cite{DC-G-R-L}]  {\rm
A code $\mC$ of dimension $t$ is {\it Quasi-MRD}, or {\it QMRD}, if $m\nmid t$ and $\rdef(\mC)=0$.}
\end{definition}
In \cite{DC-G-R-L} it is shown that as for MRD codes  QMRD codes exist for all choices of the parameters $1\leq n\leq m$ and $1\leq t<nm$ such that $m\nmid t$.

Generalized weights for $\Fqm$-linear codes were introduced in \cite{Kurihara-Matsumoto-Uyematsu, Ogg}. It was proved in \cite{Ducoat} that, through a refinement,
the definition given in \cite{Ogg} agrees with the definition of \cite{Kurihara-Matsumoto-Uyematsu}.
Similarly as V.K. Wei, who in \cite{Wei} studied the generalized weights for codes with Hamming metric motivated by cryptographical
applications in the wire-tap channel of type II, the authors in \cite{Kurihara-Matsumoto-Uyematsu, Ogg} introduce generalized rank weights to study the equivocation of wire-tap codes for network coding.

\begin{definition}[\cite{Kurihara-Matsumoto-Uyematsu}] {\rm
Given an $\Fqm$-linear code $C \leq \Fqmn$ of dimension $k$ and an integer $1 \leq r \leq k$ the $r^{th}$-\textit{generalized weight} of $C$ is
$$\MrC:=\min\{\dim_{\F_{q^m}}(V): V\in \Gamma(\Fqmn), \dim_{\F_{q^m}}(V\cap C)\geq r\},$$ where
$\Gamma(\Fqmn):=\{ V \leq \Fqmn : V^{q}=V\}$ and  $V^{q}:=\{ v^{q}:=(v_{1}^{q},\ldots v_{n}^{q}):v\in V\}$.
}
\end{definition}
The following theorem summarizes the main properties of the generalized weights for $\Fqm$-linear codes, which are similar to the generalized weights for
codes with the Hamming metric given by V.K. Wei in \cite{Wei}.
\begin{theorem}[\cite{Ducoat}, \cite{Kurihara-Matsumoto-Uyematsu}]\label{Ducoat}
Let $C \leq \Fqmn$ be an $\Fqm$-linear code of dimension $k$. Then we have
\begin{enumerate}
\item $\mathcal{M}_1(C)=d(C)$.
\item $\mathcal{M}_k (C) \leq n$.
\item For any $1\leq r \leq k$, we have $\mathcal{M}_r(C) < \mathcal{M}_{r+1}(C).$
\item For every $1\leq r\leq k$, we have $\MrC\leq n-k+r.$
\item $\{\mathcal{M}_1(C), \ldots, \mathcal{M}_k(C)\}=[n]\backslash \{ n+1-\mathcal{M}_{n-k}(\dual), n+1-\mathcal{M}_{1}(\dual)\},$ where\\ $[n]:=\{1, \ldots, n\}.$

\end{enumerate}
\end{theorem}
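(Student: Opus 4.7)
The plan is to leverage the classical bijection between Galois-closed $\Fqm$-subspaces $V\in\Gamma(\Fqmn)$ and arbitrary $\Fq$-subspaces $W\leq\Fq^n$, given by $W\mapsto W\otimes_{\Fq}\Fqm$ with inverse $V\mapsto V\cap\Fq^n$; under this correspondence $\dim_{\Fqm}V=\dim_{\Fq}W$. The crucial link to the rank metric is the identity $\rank(v)=\dim_{\Fqm}\spn_{\Fqm}\{v,v^q,\ldots,v^{q^{m-1}}\}$, which says that the smallest Galois-closed subspace containing a vector $v\in\Fqmn$ has $\Fqm$-dimension equal to $\rank(v)$.

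Parts (1)--(4) follow quickly from this set-up. For (1), any nonzero $v\in C$ is contained in a Galois-closed subspace of dimension $\rank(v)$, giving $\mathcal{M}_1(C)\leq d(C)$; conversely any Galois-closed $V$ meeting $C$ nontrivially contains a vector $v\neq 0$ with $\rank(v)\leq\dim_{\Fqm}V$, hence $\mathcal{M}_1(C)\geq d(C)$. Part (2) is witnessed by $V=\Fqmn$. For (3), starting from a Galois-closed $V$ of dimension $\mathcal{M}_{r+1}(C)$ with $\dim(V\cap C)\geq r+1$, I would choose any $\Fq$-hyperplane $W'$ of $V\cap\Fq^n$ and set $V'=W'\otimes_{\Fq}\Fqm$; this is Galois-closed of dimension one less, and $\dim(V'\cap C)\geq\dim(V\cap C)-1\geq r$, forcing the strict monotonicity $\mathcal{M}_r(C)<\mathcal{M}_{r+1}(C)$. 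Part (4) then follows by iterating (3) downward from the bound $\mathcal{M}_k(C)\leq n$ of (2).

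The main obstacle is the Wei-type duality in (5). The first ingredient is that the standard inner product on $\Fqmn$ sends Galois-closed subspaces to Galois-closed subspaces (as Frobenius commutes with the pairing), and a routine dimension count using $(V+C)^\perp=V^\perp\cap C^\perp$ yields the master identity
\[
\dim(V\cap C)-\dim(V^\perp\cap C^\perp)=\dim_{\Fqm}V+k-n
\]
for every Galois-closed $V$. I would then introduce the profile function $f_C(j):=\max\{\dim(V\cap C):V\in\Gamma(\Fqmn),\ \dim_{\Fqm}V=j\}$; the hyperplane trick from (3) shows $f_C(j)\in\{f_C(j-1),f_C(j-1)+1\}$, while $f_C(0)=0$ and $f_C(n)=k$. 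Hence the generalized weights $\mathcal{M}_1(C),\ldots,\mathcal{M}_k(C)$ are precisely the $k$ jump points of $f_C$ in $[n]$. Rewriting the master identity gives $f_{\dual}(n-j)=f_C(j)+n-k-j$, which in differences becomes $\bigl(f_C(j)-f_C(j-1)\bigr)+\bigl(f_{\dual}(n-j+1)-f_{\dual}(n-j)\bigr)=1$, so $f_C$ jumps at $j$ if and only if $f_{\dual}$ does \emph{not} jump at $n-j+1$. Substituting $j\leftrightarrow n+1-j$ yields the asserted partition $[n]=\{\mathcal{M}_r(C)\}_{r=1}^{k}\sqcup\{n+1-\mathcal{M}_s(\dual)\}_{s=1}^{n-k}$. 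The delicate bookkeeping is checking that the two pieces really have sizes $k$ and $n-k$, which rests on the strict monotonicity (3) applied to both $C$ and $\dual$.
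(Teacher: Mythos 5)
Your proposal is correct, but there is nothing in the paper to compare it against: Theorem \ref{Ducoat} is stated as a known result, cited from \cite{Ducoat} and \cite{Kurihara-Matsumoto-Uyematsu}, and the paper gives no proof of it. Your argument is essentially the standard one from those references: the correspondence $W\mapsto W\otimes_{\F_q}\F_{q^m}$ between $\F_q$-subspaces of $\F_q^n$ and Galois-closed subspaces, the identity $\rank(v)=\dim_{\F_{q^m}}\spn_{\F_{q^m}}\{v,v^q,\ldots,v^{q^{m-1}}\}$ for parts (1)--(2), the hyperplane-section step for (3)--(4), and for (5) the Wei-style duality via the profile function $f_C(j)$ whose unit increments locate the weights, combined with the dimension identity $\dim(V\cap C)-\dim(V^\perp\cap C^\perp)=\dim_{\F_{q^m}}V+k-n$ and the fact that orthogonal complementation preserves Galois-closedness. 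All the steps check out, including the bookkeeping that $f_C$ has exactly $k$ jumps and $f_{C^\perp}$ exactly $n-k$, so the two sets really partition $[n]$. Two small remarks on the statement itself rather than on your proof: in part (3) the range should be $1\leq r\leq k-1$ (otherwise $\mathcal{M}_{k+1}(C)$ is undefined), and in part (5) the set on the right-hand side should be $\{\,n+1-\mathcal{M}_s(C^\perp): s=1,\ldots,n-k\,\}$ --- the displayed two-element version is a typographical truncation. You prove the intended (correct) statement.
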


For $\F_q$-linear codes the generalized weights were introduced in \cite{a2}, refining previous definitions for $\Fqm$-linear codes given
in \cite{Kurihara-Matsumoto-Uyematsu, Ogg,Ducoat} and considering an anticode approach.

\begin{definition}[\cite{a2}] {\rm
An {\it optimal anticode} $\mA \leq (\F_q)_{n,m}$ is an $\F_q$-linear code such that
$\dim(\mA)=m \cdot \mbox{maxrk}(\mA)$, where
$\mbox{maxrk}(\mA):= \max \{ \mbox{rk}(M) : M \in \mA\}$.

Given an $\F_q$-linear code $\mC$ of dimension $t$ and an integer $1 \le r \le t$, the {\it $r$-th generalized weight}
of $\mC$ is
$$a_r(\mC):= \min \{ \mbox{maxrk}(\mA) : \mA \subseteq (\F_q)_{n,m} \mbox{ is an optimal anticode with
} \dim(\mC \cap \mA) \ge r\}.$$}
\end{definition}

\begin{theorem}[\cite{a2}] \label{theorem 30} Let $\mC \leq (\F_q)_{n,m}$ be an $\F_q$-linear code of dimension $t$. The following hold:
\begin{enumerate}
\item $a_1(\mC)=d(\mC)$.
\item $a_t(\mC)\leq n$.
\item For any $1 \leq r \leq t-1$, we have $a_r(\mC)\leq a_{r+1}(\mC)$.
\item For any $1 \leq r \leq t-m$, we have $a_r(\mC)< a_{r+m}(\mC)$
\item For any $1 \leq r \leq t$, we have $a_r(\mC)\leq n-\lfloor \frac{t-r}{m}\rfloor$.
\end{enumerate}
\end{theorem}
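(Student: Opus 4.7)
My plan is to build the argument on the classification of optimal anticodes in $(\F_q)_{n,m}$ under the standing hypothesis $n\leq m$: every optimal anticode is (up to equivalence) of the form
\[
\mA_U := \{M \in (\F_q)_{n,m} : \mbox{colsp}(M)\subseteq U\}
\]
for some $U\leq \F_q^n$, with $\mbox{maxrk}(\mA_U)=\dim U$ and $\dim(\mA_U)=m\dim U$. With this in hand, items (1)--(3) become essentially formal, while (4) and (5) will follow from a clean dimension count when one shrinks $U$ by a hyperplane.

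First I would dispatch (1) by noting that for any optimal anticode $\mA$ with $\mC\cap\mA\neq 0$, every nonzero element of the intersection has rank at most $\mbox{maxrk}(\mA)$, giving $d(\mC)\leq a_1(\mC)$; conversely, taking a codeword $M\in\mC$ of rank $d(\mC)$ and the anticode $\mA_{\mbox{colsp}(M)}$ supplies the reverse inequality. Item (2) is immediate because $(\F_q)_{n,m}=\mA_{\F_q^n}$ is itself an optimal anticode of max rank $n$ which contains $\mC$. Item (3) is monotonicity: the collection of optimal anticodes $\mA$ satisfying $\dim(\mC\cap \mA)\geq r+1$ is contained in the collection satisfying the same bound with $r$, so $a_r(\mC)\leq a_{r+1}(\mC)$.

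The main substance is in (4). The plan is to fix an optimal anticode $\mA_U$ that realizes $a_{r+m}(\mC)$, so that $s:=\dim U=a_{r+m}(\mC)$ and $\dim(\mC\cap\mA_U)\geq r+m$. For any hyperplane $U'\subset U$ the anticode $\mA_{U'}$ has codimension $m$ inside $\mA_U$, and therefore
\[
\dim(\mC\cap\mA_{U'})\geq \dim(\mC\cap\mA_U)-m\geq r.
\]
Since $\mA_{U'}$ is itself an optimal anticode with max rank $s-1$, this forces $a_r(\mC)\leq s-1<a_{r+m}(\mC)$. The subtle point here, and the main obstacle in the whole theorem, is the appeal to the classification of optimal anticodes which guarantees that a minimizer for $a_{r+m}(\mC)$ may be taken of column-type; the hypothesis $n\leq m$ is essential, since in the square case row-type anticodes also appear and the column-shrinking trick would need to be supplemented by an analogous row-shrinking argument.

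Finally, for (5) I would simply iterate (4) starting from (2). With $j:=\lfloor (t-r)/m\rfloor$ we have $r+jm\leq t$, so $a_{r+jm}(\mC)\leq a_t(\mC)\leq n$ by (3) and (2); applying (4) to jump from $r+jm$ down to $r$ in $j$ steps of size $m$ yields $a_r(\mC)\leq a_{r+jm}(\mC)-j\leq n-\lfloor (t-r)/m\rfloor$. Beyond the classification of optimal anticodes invoked in (4), the rest is straightforward bookkeeping with dimensions.
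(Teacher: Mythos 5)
This theorem is stated in the paper with an attribution to \cite{a2} and no proof is given there, so there is nothing internal to compare against; your argument is, in effect, a reconstruction of the proof in the cited source, and it follows the same route: the classification of optimal anticodes (for $n\leq m$ they are the column-support spaces $\mA_U$ of dimension $m\dim U$, plus the row-support spaces when $n=m$), from which (1)--(3) are formal and (4) follows by shrinking the support by a hyperplane to drop $\mathrm{maxrk}$ by $1$ while losing at most $m$ dimensions of intersection, with (5) obtained by iterating (4) from the trivial bound (2). The reconstruction is correct. The only loose end is the square case $n=m$, which you flag but do not carry out: there a minimizer for $a_{r+m}(\mC)$ may be a row-type anticode $\mB_W=\{M:\mathrm{rowsp}(M)\subseteq W\}$, but since $\dim\mB_W=n\dim W=m\dim W$ in that case, replacing $W$ by a hyperplane again costs exactly $m$ dimensions, so the identical computation closes the case and no new idea is needed.
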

The following theorem, which was proven in \cite{a2}, shows that for $\F_q$-linear codes the generalized weights refine, as an algebraic invariant, generalized rank weights of $\Fqm$-linear codes.
\begin{theorem}[\cite{a2}] \label{corollary 31}
Let $C \leq \Fqmn$ be an $\Fqm$-linear code of dimension $k$. For any basis $\mathcal{B}$ of $\F_{q^m}$ over $\F_q$ and for any integers $1 \leq r \leq t$ and $0 \leq \epsilon \leq m-1$, we have $m_r(C)=a_{rm-\epsilon}(\lambda_{\mathcal{B}}(C))$.
\end{theorem}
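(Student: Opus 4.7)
The plan is to establish the equality by a two-sided inequality that goes through a correspondence between Galois-closed $\F_{q^m}$-subspaces of $\Fqmn$ and column-type optimal anticodes of $(\F_q)_{n,m}$. For each $\F_q$-subspace $U\leq\F_q^n$, let $\mA_U$ denote the set of matrices in $(\F_q)_{n,m}$ all of whose columns lie in $U$; then $\dim_{\F_q}\mA_U=m\cdot\dim_{\F_q}U$ and $\mbox{maxrk}(\mA_U)=\dim_{\F_q}U$, so $\mA_U$ is an optimal anticode. Dually, every Galois-closed $V\in\Gamma(\Fqmn)$ has the form $V=U\otimes_{\F_q}\F_{q^m}$ with $U=V\cap\F_q^n$ and $\dim_{\F_{q^m}}V=\dim_{\F_q}U$.

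The first key step is the identity $\lambda_{\mathcal{B}}(C)\cap\mA_U=\lambda_{\mathcal{B}}(C\cap V)$ for $V=U\otimes_{\F_q}\F_{q^m}$. Writing $v=\sum_{j=1}^{m}\gamma_j c_j(v)$ with $c_j(v)\in\F_q^n$ the $j$-th column of $\lambda_{\mathcal{B}}(v)$, one checks that $v\in V$ iff every $c_j(v)\in U$ iff $\lambda_{\mathcal{B}}(v)\in\mA_U$, and the identity then follows because $\lambda_{\mathcal{B}}$ is an $\F_q$-bijection. Since $C\cap V$ is $\F_{q^m}$-linear this translates into the crucial numerical relation
$$\dim_{\F_q}\bigl(\lambda_{\mathcal{B}}(C)\cap\mA_U\bigr)=m\cdot\dim_{\F_{q^m}}(C\cap V).$$
The inequality $a_{rm-\epsilon}(\lambda_{\mathcal{B}}(C))\leq\mathcal{M}_r(C)$ is then immediate: for $V$ realising $\mathcal{M}_r(C)$, the associated $\mA_U$ has $\mbox{maxrk}$ equal to $\mathcal{M}_r(C)$ and intersection dimension at least $rm\geq rm-\epsilon$.

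The reverse inequality $\mathcal{M}_r(C)\leq a_{rm-\epsilon}(\lambda_{\mathcal{B}}(C))$ is the main obstacle and requires the classification of optimal anticodes from \cite{a2}. Using this classification, one shows that for a code of the form $\lambda_{\mathcal{B}}(C)$ with $C$ an $\F_{q^m}$-linear code, the minimum defining $a_{rm-\epsilon}(\lambda_{\mathcal{B}}(C))$ is attained by an anticode of column type $\mA_U$: when $n<m$ this is because every optimal anticode \emph{is} of that form, while when $n=m$ a separate argument shows that restricting to column-type anticodes does not change the minimum against $\lambda_{\mathcal{B}}(C)$. Given such a witness $\mA_U$, the identity above forces $\dim_{\F_q}(\lambda_{\mathcal{B}}(C)\cap\mA_U)$ to be a multiple of $m$ that is at least $rm-\epsilon$; since $0\leq\epsilon\leq m-1$, it must in fact be at least $rm$. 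Setting $V:=U\otimes_{\F_q}\F_{q^m}$ then yields $\dim_{\F_{q^m}}(V\cap C)\geq r$ and $\dim_{\F_{q^m}}V=\mbox{maxrk}(\mA_U)=a_{rm-\epsilon}(\lambda_{\mathcal{B}}(C))$, whence $\mathcal{M}_r(C)\leq a_{rm-\epsilon}(\lambda_{\mathcal{B}}(C))$. This divisibility-by-$m$ collapse is precisely what forces the equality to hold uniformly across the window $0\leq\epsilon\leq m-1$, and the classification input from \cite{a2} is the nontrivial ingredient without which the reduction to column-type anticodes would fail.
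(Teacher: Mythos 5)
This theorem is imported verbatim from \cite{a2}: the paper states it with a citation and gives no proof of its own, so there is no internal argument to compare yours against. Judged on its own terms, your reconstruction follows the route of the source: the correspondence between Galois-closed subspaces $V\in\Gamma(\Fqmn)$ and column-type optimal anticodes $\mA_U$ with $U=V\cap\F_q^n$, the intersection identity $\lambda_{\mathcal{B}}(C)\cap\mA_U=\lambda_{\mathcal{B}}(C\cap V)$, and the observation that $\dim_{\F_q}(\lambda_{\mathcal{B}}(C)\cap\mA_U)=m\cdot\dim_{\F_{q^m}}(C\cap V)$ is a multiple of $m$, which is exactly what collapses the whole window $0\leq\epsilon\leq m-1$ to a single value. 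The easy inequality $a_{rm-\epsilon}(\lambda_{\mathcal{B}}(C))\leq\mathcal{M}_r(C)$ is complete as written.

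The one substantive gap is the step you yourself flag only in passing. When $n=m$, the classification of optimal anticodes in \cite{a2} also admits row-type anticodes $\{M:\mathrm{rowsp}(M)\subseteq W\}$ with $W\leq\F_q^m$, and for such an anticode $\mA$ one has $\lambda_{\mathcal{B}}(C)\cap\mA=\lambda_{\mathcal{B}}\bigl(C\cap (W')^n\bigr)$ for an $\F_q$-subspace $W'\leq\F_{q^m}$ that is in general not $\F_{q^m}$-linear; neither your intersection identity nor the divisibility argument applies to it. The assertion that ``a separate argument shows that restricting to column-type anticodes does not change the minimum against $\lambda_{\mathcal{B}}(C)$'' is precisely where the real work lies in \cite{a2} --- it requires a dedicated lemma comparing $\dim_{\F_q}\bigl(C\cap(W')^n\bigr)$ with intersections of $C$ against Galois-closed spaces of no larger dimension --- and without it your reverse inequality is only established for $n<m$. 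You should also note that the classification of optimal anticodes is itself a nontrivial external input (it rests on Meshulam's theorem on spaces of matrices of bounded rank), and that the bound ``$1\leq r\leq t$'' in the statement should read ``$1\leq r\leq k$''.
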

\begin{remark} {\rm We often write $a_r$ or $a^\perp_r$ to denote $a_r(\mC)$ or $a_r(\mC^\perp)$ respectively. }

\end{remark}
\begin{theorem}[\cite{a2}] \label{theorem 32} Let $\mC \leq (\F_q)_{n,m}$ be an $\F_q$-linear code of dimension $t$ and $\mC^\perp$ its dual code. Then we have
$$\{n+1-a_{1+t-m}(\mC), \ldots, n+1-a_{1+t-\lfloor t/m\rfloor m}(\mC)\}=[n] \backslash \{ a_1(\mC^\perp), \ldots, a_{1+(n-\lceil \frac{t+1}{m}\rceil) m}(\mC^\perp)  \},$$ where $[n]:=\{1, \ldots, n\}.$
\end{theorem}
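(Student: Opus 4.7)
The plan is a Wei-type duality argument in the anticode framework of Theorem~\ref{theorem 30}. Let
\[
L = \{\,n+1-a_{1+t-jm}(\mC) : 1 \le j \le \lfloor t/m \rfloor\,\},\qquad R = \{\,a_{1+j'm}(\mC^\perp) : 0 \le j' \le n-\lceil (t+1)/m\rceil\,\}.
\]
The asserted identity is $L = [n] \setminus R$, which I would obtain by establishing (i) $L, R \subseteq [n]$, (ii) $|L|+|R|=n$, and (iii) $L \cap R = \emptyset$. The inclusions in (i) are immediate from Theorem~\ref{theorem 30}(2) applied to $\mC$ and to $\mC^\perp$. The distinctness of the elements within $L$ (and within $R$) follows from the strict inequality $a_r < a_{r+m}$ of Theorem~\ref{theorem 30}(4); writing $t = \lfloor t/m\rfloor m + s$ with $0 \le s < m$, one verifies $\lceil (t+1)/m\rceil = \lfloor t/m\rfloor + 1$, so $|L| = \lfloor t/m\rfloor$ and $|R| = n - \lfloor t/m\rfloor$.

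Everything now rests on (iii). The key structural input is that under the standing hypothesis $n \le m$, every optimal anticode $\mA \le (\F_q)_{n,m}$ is (up to equivalence) row-supported, $\mA = U \otimes \F_q^m$ for some $U \le \F_q^n$, and its trace-dual $\mA^\perp = U^\perp \otimes \F_q^m$ is again optimal with $\mathrm{maxrk}(\mA^\perp) = n - \mathrm{maxrk}(\mA)$. Suppose for contradiction $a_{1+t-jm}(\mC) + a_{1+j'm}(\mC^\perp) = n+1$ and set $j^* := a_{1+j'm}(\mC^\perp)$. Let $\mA$ be an optimal anticode with $\mathrm{maxrk}(\mA) = j^*$, $\dim \mA = j^*m$, and $\dim(\mC^\perp \cap \mA) \ge 1+j'm$. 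Combining the duality identity $(\mC^\perp \cap \mA)^\perp = \mC + \mA^\perp$ with the Grassmann formula yields
\[
\dim(\mC \cap \mA^\perp) \;\ge\; t + 1 - (j^*-j')m,
\]
so $\mA^\perp$ witnesses $a_{1+t-j''m}(\mC) \le n - j^*$ with $j'' := j^* - j'$. Using $a_{1+t-jm}(\mC) = n+1-j^*$ and the strict monotonicity of Theorem~\ref{theorem 30}(4) to compare the values of $a_{1+t-\cdot\, m}(\mC)$ at indices $j$ and $j''$, one derives a contradiction whenever $j'' \le j$. The remaining case $j < j''$ is handled by the symmetric construction with the roles of $\mC$ and $\mC^\perp$ swapped: realize $a_{1+t-jm}(\mC) = n+1-j^*$ by an optimal anticode $\mA'$, dualize, and obtain $a_{1+(j^*-1-j)m}(\mC^\perp) \le j^* - 1$, which contradicts the hypothesis $a_{1+j'm}(\mC^\perp) = j^*$ together with the strict monotonicity of $a_r(\mC^\perp)$.

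The main obstacle I expect is the structural classification of optimal anticodes in $(\F_q)_{n,m}$ with $n \le m$ (Meshulam-type theorem), which is needed to guarantee that the trace-dual of an optimal anticode is again an optimal anticode with the complementary max rank; everything else reduces to index bookkeeping between the two symmetric cases above and to verifying that the ranges of $j, j', j'', j^*$ are consistent with the admissible index sets. Once disjointness is in hand, the cardinality count from the first paragraph forces $L \sqcup R = [n]$, which is the asserted identity.
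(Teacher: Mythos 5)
The paper gives no proof of this statement: it is quoted as a known result from \cite{a2}, so there is no internal argument to compare yours against. Your reconstruction follows essentially the same route as the cited source (the counting-plus-disjointness scheme, with disjointness obtained by dualizing an optimal anticode and applying the Grassmann formula and the strict monotonicity $a_r<a_{r+m}$), and the index bookkeeping in your two cases $j''\le j$ and $j<j''$ checks out. The one load-bearing input you correctly flag --- that for $n\le m$ every optimal anticode is of the form $U\otimes \F_q^m$ (up to transposition when $n=m$), so its trace-dual is again optimal with complementary maximum rank --- is exactly the Meshulam-type classification proved in \cite{a2}; granting that, your argument is complete.
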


%

\section{Dually almost MRD codes}
We want to study the class of codes which are close to being MRD. In analogy with the definition of almost MDS codes (see \cite{Faldum-Willems}), we define:

\begin{definition} {\rm The code $\mC$ is an $s$-almost MRD code or {A$^s$MRD} if and only if $\rdef(\mC)=s$.
A$^1$MRD codes are simply called {AMRD} codes. Equivalently a code $\mC$ is an AMRD code if and only if $d=n-\lceil t/m\rceil$.}

\end{definition}
It is known that MRD and QMRD $\F_q$-linear codes exist for all $m, n, t, q$. Next we see that AMRD codes exist also for these parameters in the case $m \mid t$.

\begin{lemma}[Existence of AMRD codes] If $\mathcal{G} \leq \Fqmn$ is the Gabidulin code generated by $M_k(v_1, \ldots, v_n)$, then the extended code $\widehat{\mathcal{G}}$
is an $\Fqm$-linear AMRD code with mimimum distance $\widehat{d}=n-k+1$ and dual distance $\widehat{d}^{\perp}=1.$
\end{lemma}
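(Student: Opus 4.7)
The plan is to interpret $\widehat{\mathcal{G}}$ as the standard parity-check extension of $\mathcal{G}$: the $\F_{q^m}$-linear code of length $n+1$ whose codewords are $(c_1,\dots,c_n,c_{n+1})$ with $(c_1,\dots,c_n)\in\mathcal{G}$ and $c_{n+1}=\sum_{i=1}^n\alpha_i c_i$ for fixed $\alpha_1,\dots,\alpha_n\in\F_q$ not all zero (the classical choice is $\alpha_i=-1$). Equivalently, $\widehat{\mathcal{G}}$ is generated by $M_k(v_1,\dots,v_n,v_{n+1})$ with $v_{n+1}=\sum\alpha_i v_i$; since the Frobenius $[j]$ is $\F_q$-linear, the new column of the generator matrix is exactly the same $\F_q$-combination of the previous $n$ columns, so every codeword of $\widehat{\mathcal{G}}$ arises in this way. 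I will also take $n+1\le m$ so that the Singleton bound of Theorem~\ref{singleton-bound for Delsarte codes} reads $\widehat{d}\le(n+1)-k+1$.

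First I would compute $\widehat{d}$. The crucial observation is that for any $v\in\F_{q^m}^N$ one has $\rank(v)=\dim_{\F_q}\langle v_1,\dots,v_N\rangle$, so appending a coordinate that already lies in the $\F_q$-span of the existing coordinates leaves the rank unchanged. Applied to an arbitrary $c\in\mathcal{G}$ this yields $\rank(\widehat{c})=\rank(c)$, and since $\mathcal{G}$ is MRD with $d(\mathcal{G})=n-k+1$ I conclude $\widehat{d}=n-k+1$. Plugging this into the definition of rank defect for the length-$(n+1)$ code gives $\rdef(\widehat{\mathcal{G}})=(n+1)-k+1-(n-k+1)=1$, which is exactly the AMRD condition.

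For the dual distance I would exhibit an explicit low-rank codeword of $\widehat{\mathcal{G}}^\perp$. The vector $\alpha:=(\alpha_1,\dots,\alpha_n,-1)\in\F_q^{n+1}$ pairs to zero with every $(c_1,\dots,c_n,c_{n+1})\in\widehat{\mathcal{G}}$ by construction, so $\alpha\in\widehat{\mathcal{G}}^\perp$; its entries lie in $\F_q$ and it is nonzero, so $\rank(\alpha)=1$, forcing $\widehat{d}^\perp=1$. There is no serious obstacle here: the only care required is to pin down the definition of the extension so that the appended coordinate is an $\F_q$-linear combination of the first $n$, since it is exactly this $\F_q$-linearity that makes the Frobenius powers realise the last column as the same combination and prevents the rank from strictly increasing when passing from $c$ to $\widehat{c}$.
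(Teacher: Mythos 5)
Your proposal is correct and follows essentially the same route as the paper: identify the extension as appending an $\F_q$-linear combination of the coordinates (so the rank, hence the minimum distance, is preserved and $\widehat{d}=n-k+1$, giving defect $1$), and then exhibit the all-$\F_q$ parity-check vector as a rank-one codeword of $\widehat{\mathcal{G}}^\perp$. The only difference is that you spell out the step the paper dismisses as "easily verified," namely why $d(\widehat{\mathcal{G}})=d(\mathcal{G})$.
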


\begin{proof} One easily verifies that $d(C)=d(\widehat{C})$, for all $\Fqm$-linear code $C \leq \Fqmn$. Therefore $n-k+1=d(\mathcal{G})=d(\widehat{\mathcal{G}})$. Since $\dim_{\Fqm} \widehat{\mathcal{G}}=k$, then $\rdef(\widehat{\mathcal{G}})=(n+1)-k+1-d(\widehat{\mathcal{G}})=1$. Furthermore, since $\overline{1}=(1, \ldots, 1) \in \widehat{\mathcal{G}}^\perp $, we have $\widehat{d}^{\perp}=1.$

\end{proof}
Given an $\F_q$-linear AMRD code with $m\mid t$, the following lemma allows us to find $\F_q$-linear AMRD codes for which $m$ does not divide the dimension $t$.
\begin{lemma} If $\mC \leq (\F_q)_{n,m}$ is an AMRD $\F_q$-linear code of dimension $t$ with $m, t \neq 1$ and $m \mid t$, then there exists an AMRD $\F_q$-linear code $\mC' \leq (\F_q)_{n,m}$ of dimension $t'$ with $m \nmid t'$.
\end{lemma}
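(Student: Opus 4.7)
The plan is to pass from $\mC$ to a codimension-one $\F_q$-subspace that still realizes a minimum-rank codeword. Concretely, since $\mC$ is AMRD of dimension $t$ with $m\mid t$, its minimum distance is $d=n-t/m$, so there exists $A\in\mC$ with $\rank(A)=d$. Let $\mC'$ be any $(t-1)$-dimensional $\F_q$-subspace of $\mC$ containing $A$. The hypotheses $m,t\neq 1$ together with $m\mid t$ force $t\geq m\geq 2$, so $t':=t-1\geq 1$ and $\mC'$ is nontrivial; moreover $m\nmid t'$ since $m\geq 2$ and $m\mid t$.

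The inclusions $\langle A\rangle\subseteq\mC'\subseteq\mC$ then sandwich the minimum distance from both sides:
$$n-t/m=d(\mC)\leq d(\mC')\leq \rank(A)=n-t/m,$$
forcing $d(\mC')=n-t/m$. Writing $t=km$ with $k\geq 1$, one checks directly that $\lceil t'/m\rceil=\lceil k-1/m\rceil=k=t/m$. Plugging these values into the definition of rank defect gives
$$\rdef(\mC')=n-\lceil t'/m\rceil+1-d(\mC')=n-t/m+1-(n-t/m)=1,$$
so $\mC'$ is AMRD of dimension $t'$ with $m\nmid t'$, as required.

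The argument is essentially mechanical; there is no serious obstacle. The only place the hypotheses enter nontrivially is in the ceiling identity $\lceil(t-1)/m\rceil=t/m$, which fails precisely when $m=1$, and in guaranteeing that $t-1$ is a legitimate (nontrivial and not divisible by $m$) dimension for the subcode $\mC'$.
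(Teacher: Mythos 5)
Your proof is correct and takes essentially the same approach as the paper's: pass to a subcode of dimension $t'$ satisfying $\lceil t'/m\rceil = t/m$ that contains a codeword of minimum rank, and observe that the rank defect is unchanged (the paper allows any such $t'<t$, while you fix $t'=t-1$). Your write-up is in fact more detailed than the paper's, explicitly sandwiching $d(\mC')$ and verifying the ceiling identity.
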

\begin{proof} If $m \mid t$, then it is always possible to find an integer $t'$ such that $t'< t$ and $\lceil \frac{t'}{m} \rceil=\frac{t}{m}$. Since $\mC$ is AMRD, then $d(\mC)=n-\frac{t}{m}=n-\lceil \frac{t'}{m} \rceil$. Let $\mC'$ be the $t'$-dimensional subcode of $\mC$ containing a vector of minimum rank $d$. Then $\mC'$ is AMRD.
\end{proof}
It is well known that the dual code of an MRD code is also an MRD code and therefore both have rank defect 0. Based on this property we define.
\begin{definition} {\rm
We say that an $\F_q$-linear code $\mC$ is {dually AMRD} if
$\rdef(\mC)=\rdef(\mC^\perp)=1$. A similar definition is given for an $\Fqm$-linear code $C\leq \Fqmn$ considering its associated code.}
\end{definition}

\begin{example} \label{dually-ex} {\rm Let $q=2$, $n=m=3$, $t=4$ and $$\mC=\left \langle \left(
                                                          \begin{array}{ccc}
                                                            1 & 0 & 0 \\
                                                            0 & 0 & 0 \\
                                                            0 & 0 & 0 \\
                                                          \end{array}
                                                        \right), \left(
                                                          \begin{array}{ccc}
                                                            0 & 0 & 0 \\
                                                            0 & 1 & 0 \\
                                                            0 & 0 & 0 \\
                                                          \end{array}
                                                        \right),\left(
                                                          \begin{array}{ccc}
                                                            0 & 0 & 0 \\
                                                            0 & 0 & 0 \\
                                                            0 & 0 & 1 \\
                                                          \end{array}
                                                        \right),\left(
                                                          \begin{array}{ccc}
                                                            0 & 1 & 0 \\
                                                            0 & 0 & 0 \\
                                                            0 & 0 & 0 \\
                                                          \end{array}
                                                        \right)
 \right \rangle. $$ Then $\left(                                                        \begin{array}{ccc}
                                                            0 & 0 & 0 \\
                                                            1 & 0 & 0 \\
                                                            0 & 0 & 0 \\
                                                          \end{array}
                                                        \right) \in \mC^\perp$, $d=d^\perp=1$ and $\mC$ is dually AMRD.}

\end{example}
We can see that among all AMRD codes the dually AMRD codes are the most similar to MRD codes. However not all AMRD codes are dually AMRD, as we show in the following simple example.

\begin{example} {\rm Let $q=2$, $n=m=2$, $t=1$,  {\scriptsize$\mC=\left\{\left(
                                        \begin{array}{cc}
                                          0 & 0 \\
                                          0 & 0 \\
                                        \end{array}
                                      \right), \left(
                                        \begin{array}{cc}
                                          0 & 0 \\
                                          1 & 0 \\
                                        \end{array}
                                      \right)
 \right\}$} and {\scriptsize$$\mC^{\perp}=\left\{\left(
                                        \begin{array}{cc}
                                          0 & 0 \\
                                          0 & 0 \\
                                        \end{array}
                                      \right), \left(
                                        \begin{array}{cc}
                                          1 & 0 \\
                                          0 & 0 \\
                                        \end{array}
                                      \right), \left(
                                        \begin{array}{cc}
                                          0 & 0 \\
                                          0 & 1 \\
                                        \end{array}
                                      \right), \left(
                                        \begin{array}{cc}
                                          0 & 1 \\
                                          0 & 0 \\
                                        \end{array}
                                      \right), \left(
                                        \begin{array}{cc}
                                          1 & 0 \\
                                          0 & 1 \\
                                        \end{array}
                                      \right), \left(
                                        \begin{array}{cc}
                                          1 & 1 \\
                                          0 & 0 \\
                                        \end{array}
                                      \right), \left(
                                        \begin{array}{cc}
                                          1 & 1 \\
                                          0 & 1 \\
                                        \end{array}
                                      \right), \left(
                                        \begin{array}{cc}
                                          0 & 1 \\
                                          0 & 1 \\
                                        \end{array}
                                      \right)
 \right\}.$$} Then $d=1$ and $\mC$ is AMRD, while $d^\perp=1$ and $\mC^\perp$ is QMRD.}

\end{example}

According to Proposition 19 in \cite{DC-G-R-L} an $\Fqm$-linear code $C\leq \Fqmn$ is dually AMRD if and only if $d+d^\perp=n$. More generally we have the following results.
\begin{proposition} \label{prop 1} Let $\mC \leq (\F_q)_{n,m}$ be an $\F_q$-linear code with minimum distance $d$ and dual distance $d^\perp$.
The following facts hold:
\begin{enumerate}
\item If $\mC$ is dually AMRD, then $t\geq m$ and $d+d^\perp=\left\{
                                          \begin{array}{ll}
                                            n, & \hbox{if $m \mid t$;} \\
                                            n-1, & \hbox{if $m \nmid t$.}
                                          \end{array}
                                        \right.$

\item If $m\mid t$, then $\mC$ is dually AMRD if and only if $d+d^\perp=n$. In particular an $\Fqm$-linear code $C$ is dually AMRD if and only if $d+d^\perp=n$.
\item If $d+d^\perp=n$ and $m\nmid t$ or $d+d^\perp=n-1$ and $m\mid t$, then $\mC$ is not a dually AMRD code.
\item Let $t=\beta m+\alpha$, where $\beta, \alpha \in \Z$ and $ 0 \leq \alpha < m$. Then we have.
    \begin{enumerate}
    \item If $m \mid t$, then $\mC$ is dually AMRD if and only if $d=n-\beta$ and $d^\perp=\beta$.
    \item If $m \nmid t$, then $\mC$ is dually AMRD if and only if $d=n-\beta-1$ and $d^\perp=\beta$.

        \end{enumerate}
\item Let $t=\beta m+\alpha$, where $\beta, \alpha \in \Z$ and $ 0 \leq \alpha < m$. Then $\mC^\perp$ is AMRD if and only if $d^\perp=\beta$.
\end{enumerate}

\end{proposition}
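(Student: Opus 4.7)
The whole proposition reduces to the arithmetic of the Singleton bound $d\leq n-\lceil t/m\rceil+1$ applied to both $\mC$ and $\mC^\perp$ (whose dimension is $nm-t$), so my plan is to establish part (5) first, use it to extract (4), and then deduce (1)--(3) from (4).

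First I would fix the parametrization $t=\beta m+\alpha$ with $0\leq \alpha<m$ and record the two key identities
\begin{equation*}
\left\lceil\frac{t}{m}\right\rceil=\begin{cases}\beta & \text{if } \alpha=0,\\ \beta+1 & \text{if } \alpha>0,\end{cases}\qquad \left\lceil\frac{nm-t}{m}\right\rceil=n-\beta.
\end{equation*}
The second identity I would verify case-by-case: when $\alpha=0$ we have $nm-t=(n-\beta)m$, and when $\alpha>0$ we have $nm-t=(n-\beta-1)m+(m-\alpha)$ with $0<m-\alpha<m$. With these, the Singleton bound for $\mC^\perp$ reads $d^\perp\leq \beta+1$, so AMRD for $\mC^\perp$ becomes $d^\perp=\beta$, which is exactly part (5). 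Similarly, AMRD for $\mC$ becomes $d=n-\beta$ if $\alpha=0$ and $d=n-\beta-1$ if $\alpha>0$; combining with (5) produces (4a) and (4b).

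Part (1) then follows by summing the two equalities in (4a) and (4b), with the extra observation that $\beta=d^\perp\geq 1$ (since $\mC^\perp\neq \{0\}$ under the standing hypothesis), which forces $t=\beta m+\alpha\geq m$. Part (3) is the contrapositive of part (1): if $\mC$ were dually AMRD then the sum $d+d^\perp$ would have to be $n$ when $m\mid t$ and $n-1$ when $m\nmid t$, contradicting each of the two hypotheses in (3).

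For part (2), the forward direction is part (1) specialized to $\alpha=0$. For the converse I would compute, under $m\mid t$ and $d+d^\perp=n$,
\begin{equation*}
\rdef(\mC)+\rdef(\mC^\perp)=(n-\beta+1-d)+(\beta+1-d^\perp)=n+2-(d+d^\perp)=2,
\end{equation*}
and then argue that among nonnegative integer defects summing to $2$ either both equal $1$ (dually AMRD, as desired) or one of them vanishes. The latter is ruled out by Delsarte's self-duality: under $m\mid t$ (equivalently $m\mid nm-t$) a rank defect of $0$ forces the code to be MRD, hence the dual is MRD too, hence both defects vanish, contradicting the sum being $2$. The $\Fqm$-linear postscript follows because the associated code of an $\Fqm$-linear code of $\Fqm$-dimension $k$ has $\F_q$-dimension $mk$, which is always divisible by $m$, so (2) applies verbatim. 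The only step requiring anything beyond bookkeeping with the Singleton bound is this appeal to the self-duality of MRD codes; I expect it to be the main (mild) obstacle, while the rest is a direct translation of $\rdef$ into the parameters $\beta$, $d$, and $d^\perp$.
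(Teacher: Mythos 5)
Your proposal is correct and follows essentially the same route as the paper: both reduce everything to the identity $\lceil (nm-t)/m\rceil = n-\lfloor t/m\rfloor$ (equivalently $\rdef(\mC^\perp)=\lfloor t/m\rfloor+1-d^\perp$), and both rule out the defect patterns $(0,2)$ and $(2,0)$ in part (2) via the fact that when $m\mid t$ a code of rank defect $0$ is MRD and hence has an MRD dual. The only difference is cosmetic ordering --- you derive (5) and (4) first and deduce (1)--(3) from them, while the paper proves (1) and (2) directly from the defect formulas --- so no further comment is needed.
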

\begin{proof}
\begin{enumerate}
\item Let $\mC$ be dually AMRD code. If $t<m$, then $\rdef(\mC^\perp)=0$, a contradiction. On the other hand, $$1=\rdef(\mC)=n-\lceil t/m \rceil
+1-d=\rdef(\mC^\perp)=n-\lceil n-t/m\rceil+1-d^\perp=\lfloor t/m \rfloor+1-d^\perp.$$ Therefore $d+d^\perp=n+\lfloor - t/m\rfloor+ \lfloor t/m\rfloor=\left\{
                                          \begin{array}{ll}
                                            n, & \hbox{if $m \mid t$;} \\
                                            n-1, & \hbox{if $m \nmid t$.}
                                          \end{array}
                                        \right.
 $

\item Let $d+d^\perp=n$. Then
$$\begin{array}{rl}
\rdef(\mC)+\rdef(\mC^\perp)= &(n-\lceil t/m\rceil+1-d)+(\lfloor t/m\rfloor+1-d^\perp)\\
=&\lfloor t/m\rfloor + \lfloor -t/m\rfloor+2\\
=&\left\{
                                                                      \begin{array}{ll}
                                                                        2, & \hbox{if $m \mid t$;} \\
                                                                        1, & \hbox{if $m \nmid t$.}
                                                                      \end{array}
                                                                    \right.
\end{array}$$
We know that if $m\mid t$ and $\rdef(\mC)=0$, then $\mC$ is MRD and $\rdef(\mC^\perp)=0$, a contradiction. Therefore $\rdef(\mC)=\rdef(\mC^\perp)=1$.
\item It is an immediate consequence of the part 1.

\item \begin{enumerate}
\item Let $m \mid t$ and $\mC$ be dually AMRD. Then $d=n- \lceil t/m\rceil=n-\beta$ and $d^\perp=\lfloor t/m\rfloor=\beta$. The reciprocal is followed from part 2.
\item Let $m \nmid t$ and $\mC$ be dually AMRD. Then $d=n- \lceil t/m\rceil=n-(\beta+1)$ and $d^\perp=\lfloor t/m\rfloor=\beta$. Reciprocally, let $d=n-\beta-1$ and $d^\perp=\beta$. Since $t/m=\beta +\alpha/m$ with $0<\alpha/m<1 $, then $\rdef(C)=n-\lceil t/m\rceil+1-d=n-(\beta+1)+1-n+\beta+1=1$ and $\rdef(C^\perp)=\lfloor t/m\rfloor+1-d^\perp=\beta+1-\beta=1.$
\end{enumerate}

\item By \cite[Lemma 21]{DC-G-R-L} we have $\rdef(\mC^\perp)= \beta +1-d^\perp$. Therefore the result easily follows. (Notice that $d^\perp=\beta$, implies $\beta \neq 0$, which is equivalent to $ t>m$).
\end{enumerate}
\end{proof}

\begin{remark}  {\rm  The reciprocal of the Proposition \ref{prop 1} (1) is not true when $m \nmid t$. In fact, if $d+d^\perp=n-1$ and $m\nmid t$, then $\rdef(C)+\rdef(C^\perp)=2$. Therefore $\left(\rdef(C), \rdef(C^\perp)\right)\in \{(0,2), (2,0), (1,1)\}$ and $\mC$ is not necessarily a dually AMRD code. For example if $q=2$, $m=n=3$, $t=7$ and $$\mC^\perp=\left\langle \left(
                           \begin{array}{ccc}
                             1 & 0 & 0 \\
                             0 & 0 & 0 \\
                             0 & 0 & 0 \\
                           \end{array}
                         \right), \left(
                           \begin{array}{ccc}
                             0 & 0 & 0 \\
                             0 & 1 & 0 \\
                             0 & 0 & 0 \\
                           \end{array}
                         \right)
 \right\rangle,$$ then $\left(
                           \begin{array}{ccc}
                             0 & 0 & 0 \\
                             1 & 0 & 0 \\
                             0 & 0 & 0 \\
                           \end{array}
                         \right) \in \mC$, $d=d^\perp=1$ and $\mC$ is QMRD. This happens because $d \neq n-\beta-1$ and $d^\perp \neq \beta.$}

\end{remark}

\begin{remark}  {\rm By Proposition \ref{prop 1} (2) an $\Fqm$-linear code $C$ and its associated code $\lambda_{\mathcal{B}}(C)$ are dually AMRD if and only if $d+d^\perp=n$. However not always a dually AMRD $\F_q$-linear code $\mC$ arises from a dually AMRD $\Fqm$-linear code $C$, even when $m \mid t$ and $d+d^\perp=n$. For example the code $\mC \leq (\F_2)_{3,3}$ with dimension $t=3$, $d+d^\perp=2+1=3$ and whose nonzero codewords are {\scriptsize $$ \left(
                           \begin{array}{ccc}
                             1 & 0 & 0 \\
                             0 & 1 & 0 \\
                             0 & 0 & 0 \\
                           \end{array}
                         \right), \left(
                           \begin{array}{ccc}
                             0 & 1 & 0 \\
                             0 & 0 & 1 \\
                             1 & 0 & 0 \\
                           \end{array}
                         \right), \left(
                           \begin{array}{ccc}
                             0 & 0 & 1 \\
                             1 & 0 & 0 \\
                             0 & 1 & 0 \\
                           \end{array}
                         \right), \left(
                           \begin{array}{ccc}
                             1 & 1 & 0 \\
                             0 & 1 & 1 \\
                             1 & 0 & 0 \\
                           \end{array}
                         \right), \left(
                           \begin{array}{ccc}
                             1 & 0 & 1 \\
                             1 & 1 & 0 \\
                             0 & 1 & 0 \\
                           \end{array}
                         \right), \left(
                           \begin{array}{ccc}
                             0 & 1 & 1 \\
                             1 & 0 & 1 \\
                             1 & 1 & 0 \\
                           \end{array}
                         \right),\left(
                           \begin{array}{ccc}
                             1 & 1 & 1 \\
                             1 & 1 & 1 \\
                             1 & 1 & 0 \\
                           \end{array}
                         \right),
$$} is a dually AMRD $\F_2$-linear code which does not arise from an $\F_{2^3}$-linear code $C$. In fact, suppose $\lambda_{\mathcal{B}}(C)=\mC$ for some code $C \leq \F_{2^3}^3$ and $\mathcal{B}=\{\gamma_1, \gamma_2, \gamma_3\}$ is a basis of $\F_{2^3}$ over $\F_2$. Then there exists a codeword $c\in C$ such that $c':=\left(
                           \begin{array}{ccc}
                             1 & 0 & 0 \\
                             0 & 1 & 0 \\
                             0 & 0 & 0 \\
                           \end{array}
                         \right)=\lambda_{\mathcal{B}}(c)$. Therefore $c=(\gamma_1, \gamma_2,0)\in C$ and $0 \neq \gamma_1^{-1}c=(1, \gamma_1^{-1}\gamma_2, 0)\in C $. Then $0, c' \neq \lambda_{\mathcal{B}}(\gamma_1^{-1}c)=\left(\begin{array}{ccc}
                             \lambda_{11} & \lambda_{12} & \lambda_{13} \\
                             \lambda_{21} & \lambda_{22} & \lambda_{23} \\
                             0 & 0 & 0 \\
                           \end{array}
                         \right)\in \mC$, which is not possible.  }

\end{remark}
\begin{theorem}[Existence of 1-dimensional dually AMRD codes]\label{exist-dually-1} Dually AMRD $\Fqm$-linear codes $C\leq \Fqmn$ with dimension $k=1$ exist for all parameters $m,n,q$.

\end{theorem}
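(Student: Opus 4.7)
The plan is to reduce the existence question to a concrete condition on a single generator, and then exhibit that generator by direct construction. Since we are looking for an $\Fqm$-linear code $C = \langle v\rangle$ of dimension $k=1$, its associated $\Fq$-linear dimension is $t = m$, so $m \mid t$ with $\beta = 1$ and $\alpha = 0$ in the notation of Proposition \ref{prop 1}(4)(a). That proposition therefore tells us $C$ is dually AMRD if and only if $d(C) = n-1$ and $d(C^\perp) = 1$. So the problem reduces to producing a vector $v \in \Fqmn$ of rank exactly $n-1$ whose orthogonal complement (with respect to the standard inner product) contains a vector of rank $1$.

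The natural candidate is $v = (v_1, \ldots, v_{n-1}, 0) \in \Fqmn$, where $v_1, \ldots, v_{n-1}$ are chosen to be $\Fq$-linearly independent in $\Fqm$. Such a choice is available precisely because the standing hypothesis $n \leq m$ gives $n - 1 \leq m - 1$, leaving enough room inside $\Fqm$ (viewed as an $m$-dimensional $\Fq$-space) to select $n-1$ linearly independent elements. By construction, $\rank(v) = \dim_{\Fq} \langle v_1, \ldots, v_{n-1}\rangle = n-1$, so $d(C) = n-1$.

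For the dual, the trick is that padding the last coordinate with a zero guarantees the unit vector $e_n = (0, \ldots, 0, 1)$ satisfies $\langle e_n, v\rangle = 0$; hence $e_n \in C^\perp$ and $d(C^\perp) \leq \rank(e_n) = 1$, forcing equality. Summing yields $d(C) + d(C^\perp) = n$, and Proposition \ref{prop 1}(2) immediately concludes that $C$ is dually AMRD. (The trivial edge case $n=1$ is excluded because then a one-dimensional code would be all of $\Fqm$, violating the paper's non-triviality convention.)

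The argument is essentially a one-line construction, so there is no substantial technical obstacle. The only point of care is to match the rank parameters dictated by Proposition \ref{prop 1}(4)(a) with the geometric flexibility afforded by $n \leq m$; once one sees that zeroing the last coordinate provides the rank-$1$ dual codeword for free while the remaining $n-1$ coordinates can be taken $\Fq$-linearly independent, the verification of $d(C) + d(C^\perp) = n$ is immediate.
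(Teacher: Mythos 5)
Your proof is correct and follows essentially the same route as the paper: both exhibit a one-dimensional code generated by a vector of rank $n-1$ in $\Fqmn$, produce an explicit rank-one dual codeword, and conclude via Proposition \ref{prop 1}(2) from $d+d^\perp=n$. The only cosmetic difference is that the paper realizes this generator as the extension of the Gabidulin code $M_1(v_1,\ldots,v_{n-1})$ (with $(1,\ldots,1)$ as the rank-one dual vector), whereas you pad the last coordinate with $0$ and use $e_n$ instead.
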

\begin{proof} Let $v_1, \ldots, v_{n-1}$ be linearly independent and $\mathcal{G}$ the Gabidulin code generated by $M_1(v_1, \ldots, v_{n-1})$. Then $\widehat{\mathcal{G}}$ is a $1$-dimensional dually AMRD code. In fact, $d(\widehat{\mathcal{G}})=d(\mathcal{G})=n-1$ and $d(\widehat{\mathcal{G}}^\perp)=1$. Therefore $d(\widehat{\mathcal{G}})+d(\widehat{\mathcal{G}}^\perp)=n$ and by Proposition \ref{prop 1} $\widehat{\mathcal{G}}$ is dually AMRD.
  \end{proof}

\section{Rank distribution of dually AMRD $\F_q$-linear codes}
In the following theorem the authors proved in \cite{DC-G-R-L} that the rank distribution of a code $\mC$ is determined by
its parameters, together with the number of codewords of small weight:
$A_d(\mC),\ldots, A_{n-d^\bot}(\mC)$.
\begin{theorem}[\cite{DC-G-R-L}]\label{04-02-15}
Let $\mC \leq (\F_q)_{n,m}$ be a $t$-dimensional code, with minimum distance $d$ and dual minimum distance $d^\perp$.
Let $\delta=1$ if $\mC$ is MRD, and $\delta=0$ otherwise.
For all $1 \le r \le d^\perp$ we have
\begin{eqnarray*}
A_{n-d^\bot+r}(\mC) &=& (-1)^{r}q^{r \choose 2}
\sum\limits_{j=d^\bot}^{n-d}{j\brack d^\bot-r}  {j-d^\bot+r-1\brack
r-1}A_{n-j}(\mC)\\
&& + {n\brack d^\bot-r}\sum\limits_{i=0}^{r-1-\delta} (-1)^iq^{i \choose 2
}{n-d^\bot+r\brack i}    \left(q^{t-m(d^\perp-r+i)}-1\right).
\end{eqnarray*}
In particular, $n$, $m$, $t$, $d$, $d^\perp$ and $A_d(\mC),\ldots, A_{n-d^\bot}(\mC)$
determine the rank
distribution of $\mC$.
\end{theorem}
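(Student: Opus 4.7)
The plan is to derive the formula by combining Delsarte's MacWilliams-type identity for rank metric codes with a $q$-Möbius inversion on the Gaussian subspace lattice. The hypotheses $A_0(\mC^\perp)=1$ and $A_j(\mC^\perp)=0$ for $1\le j\le d^\perp-1$ translate into $d^\perp$ linear constraints on the rank distribution of $\mC$; since $A_i(\mC)=0$ for $1\le i\le d-1$, the only unknowns are the $d^\perp$ values $A_{n-d^\perp+r}(\mC)$ with $r=1,\dots,d^\perp$. The resulting system is square, and uniqueness of its solution will deliver the displayed formula.

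Concretely, one first repackages the dual-distance hypotheses, via Delsarte's transform, into the $q$-binomial moment identities
$$\sum_{i=0}^{n}{n-i\brack s}A_i(\mC)=q^{t-ms}{n\brack s},\qquad 0\le s\le d^\perp-1.$$
The Gaussian binomial ${n-i\brack s}$ vanishes for $i>n-s$, so among the unknowns, $A_{n-d^\perp+r}(\mC)$ first appears in the equation with $s=d^\perp-r$. This makes the system upper-triangular in the pair $(r,s)$ with unit diagonal, and its inverse coincides, up to signs and $q^{\binom{\cdot}{2}}$ factors, with the $q$-analog of binomial inversion (equivalently, the Möbius function of an interval of the Gaussian lattice). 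Applying this inverse separates the contribution of the known tail $A_d(\mC),\dots,A_{n-d^\perp}(\mC)$ from the inhomogeneous data $q^{t-ms}{n\brack s}$.

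Simplifying the resulting expressions via the $q$-Vandermonde convolution collapses pairs of Gaussian binomials and produces the first summand in the theorem, with the factor ${j\brack d^\perp-r}{j-d^\perp+r-1\brack r-1}$, while reindexing $s=d^\perp-r+i$ in the inhomogeneous part produces the second summand with prefactor ${n\brack d^\perp-r}$ and inner factor $(-1)^iq^{i\choose 2}{n-d^\perp+r\brack i}(q^{t-m(d^\perp-r+i)}-1)$. The main obstacle is the combinatorial bookkeeping of this inversion, in particular tracking the overall power of $q$ and collapsing products of three or four Gaussian binomials into the compact $(-1)^rq^{r\choose 2}$ normalization in the statement. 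A secondary subtlety is the MRD case: when $\mC$ is MRD one has $d=n-d^\perp+2$, and either the sum $\sum_{j=d^\perp}^{n-d}$ becomes empty or the $s=d^\perp-1$ moment satisfies $q^{t-m(d^\perp-1)}-1=0$; in either scenario the extremal index $i=r-1$ of the inner sum contributes nothing, which is exactly what the $-\delta$ in the upper limit of the inner sum encodes. The ``in particular'' clause is then immediate, since each $A_{n-d^\perp+r}(\mC)$ is written explicitly in terms of $n,m,t,d,d^\perp$ and $A_d(\mC),\dots,A_{n-d^\perp}(\mC)$.
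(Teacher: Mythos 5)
The paper does not actually prove this theorem: it is imported verbatim from \cite{DC-G-R-L}, so there is no in-paper argument to compare yours against; the relevant comparison is with the proof in that reference, which proceeds along essentially the lines you describe. Your setup is sound. The moment identities $\sum_{i}{n-i\brack s}A_i(\mC)=q^{t-ms}{n\brack s}$ for $0\le s\le d^\perp-1$ are correct consequences of the rank-metric MacWilliams identities (the same identity this paper quotes from Ravagnani in the proof of Lemma~\ref{lema 2}) together with $A_0(\mC^\perp)=1$ and $A_j(\mC^\perp)=0$ for $1\le j<d^\perp$; the observation that ${n-i\brack s}$ annihilates every unknown with $i>n-s$, so that the system in the $A_{n-d^\perp+r}(\mC)$ is triangular with unit diagonal, is right; and your explanation of $\delta$ is the correct one: for an MRD code $d^\perp=t/m+1$, so the $i=r-1$ term carries the factor $q^{t-m(d^\perp-1)}-1=0$. (The emptiness of the sum $\sum_{j=d^\perp}^{n-d}$ in the MRD case is a separate, simultaneous phenomenon; it is the vanishing factor, not the empty sum, that licenses truncating the inner sum at $r-1-\delta$.)

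That said, what you have written is a plan rather than a proof. The content of the theorem is a specific closed formula, and the step that produces it --- actually performing the $q$-binomial inversion and collapsing the resulting products of Gaussian coefficients into $(-1)^{r}q^{\binom{r}{2}}{j\brack d^\perp-r}{j-d^\perp+r-1\brack r-1}$ for the known tail, and into ${n\brack d^\perp-r}\sum_i(-1)^iq^{\binom{i}{2}}{n-d^\perp+r\brack i}\bigl(q^{t-m(d^\perp-r+i)}-1\bigr)$ for the inhomogeneous part --- is precisely what you defer as ``combinatorial bookkeeping.'' For a statement of this kind that computation \emph{is} the proof: the triangularity argument only shows that \emph{some} formula of this shape exists, not that the displayed coefficients are the right ones. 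To close the gap you would need either to carry out the inversion explicitly or, more economically, to argue by induction on $r$: substitute the claimed expression for $A_{n-d^\perp+1},\dots,A_{n-d^\perp+r}$ into the moment equation with $s=d^\perp-r$ and verify the resulting Gaussian-binomial identity (a $q$-Vandermonde computation). The ``in particular'' clause is indeed immediate once the formula is in hand.
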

If we apply this theorem to dually AMRD $\F_q$-linear codes, then we have the following results.
\begin{proposition}\label{prop 0} Let $\mC \leq (\F_q)_{n,m}$ be a $t$-dimensional dually AMRD code, with minimum distance $d$ and dual minimum distance $d^\perp$.
The following facts hold:

\begin{enumerate}
\item If $m \nmid t$ and $t=\beta m + \alpha$ with $\beta, \alpha \in \Z$ and $1 \leq \alpha < m$, then \begin{eqnarray*}
A_{d+r}(\mC) &=& (-1)^{r-1}q^{r -1\choose 2} \left(
{n-d-1\brack r-1}  A_{d+1} + {n-d\brack r} {r-1\brack 1}  A_{d} \right)\\
 & &+{n\brack d+r}\sum\limits_{i=0}^{r-2} (-1)^iq^{i \choose 2
}{d+r\brack i}    \left(q^{\alpha+m(r-i)}-1\right),
\end{eqnarray*}
for all $r=2,\ldots,n-d$.
\item If $m \mid t$, then we have
\begin{eqnarray*}
A_{d+r}(\mC) &=& (-1)^{r}q^{r \choose 2}
{n-d\brack r}  A_d+ {n\brack d+r}\sum\limits_{i=0}^{r-1} (-1)^iq^{i \choose 2
}{d+r\brack i}    \left(q^{m(r-i)}-1\right),
\end{eqnarray*}
for all $r=1,\ldots,n-d$.
In particular, for an $\Fqm$-linear code $C\leq \Fqmn$ of dimension $k$ we have
\begin{eqnarray*}
A_{d+r}(\mC) &=& (-1)^{r}q^{r \choose 2}
{k\brack r}  A_d+ {n\brack k-r}\sum\limits_{i=0}^{r-1} (-1)^iq^{i \choose 2
}{d+r\brack i}    \left(q^{m(r-i)}-1\right),
\end{eqnarray*}
for all $r=1,\ldots,n-d$.
\item If $m \mid t$, then $\frac{{n\brack d+2}}{q{n-2\brack 2}} (q^m-1)\left({d+2\brack 1}-q^m-1\right) \leq A_d\leq \frac{{n\brack d+1}}{{n-d\brack 1}}(q^{m}-1)$.
\item Let $m \mid t$. If $A_{d+1}=0$, then $A_d=\frac{{n\brack d+1}}{{n-d\brack 1}}(q^{m}-1)$ and if $A_{d+2}=0$, then $$ A_d= \frac{{n\brack d+2}}{q{n-2\brack 2}}(q^m-1) \left({d+2\brack 1}-q^m-1\right).$$
\end{enumerate}
\end{proposition}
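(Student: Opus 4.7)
The approach for all four parts is to specialize Theorem \ref{04-02-15} to a dually AMRD code, where the parameters are tightly constrained by Proposition \ref{prop 1}. Writing $t = \beta m + \alpha$ with $0 \leq \alpha < m$, we have $d^\perp = \beta$ and $\delta = 0$ (since a dually AMRD code has $\rdef(\mC) = 1$ and is not MRD); moreover $n - d = \beta$ when $m \mid t$ and $n - d = \beta + 1$ when $m \nmid t$. Consequently the outer index $j$ in the sum of Theorem \ref{04-02-15} runs over the single value $d^\perp$ in the divisible case and over the two values $\{d^\perp, d^\perp + 1\}$ in the non-divisible case, so the long sum collapses to at most two explicit terms. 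That collapse is the key observation driving everything.

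For part (2) ($m \mid t$) I would just substitute $j = d^\perp$ directly. Then ${j - d^\perp + r - 1 \brack r - 1} = 1$, the symmetry ${a \brack b} = {a \brack a-b}$ converts ${d^\perp \brack d^\perp - r}$ into ${n - d \brack r}$ and ${n \brack d^\perp - r}$ into ${n \brack d + r}$, the exponent simplifies as $t - m(d^\perp - r + i) = m(r - i)$, and $A_{n - d^\perp} = A_d$; reading all of this off gives the stated identity. The $\Fqm$-linear refinement follows by replacing $n - d$ with $k$ and invoking the same symmetry once more. For part (1) ($m \nmid t$) I would first perform the index shift $s := r - 1$ so that Theorem \ref{04-02-15} computes $A_{n - d^\perp + s} = A_{d+r}$, then split the two-term outer sum into a multiple of $A_{d+1}$ (from $j = d^\perp$) and a multiple of $A_d$ (from $j = d^\perp + 1$). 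Binomial symmetry together with ${r - 1 \brack r - 2} = {r - 1 \brack 1}$ and ${r - 2 \brack r - 2} = 1$ puts the coefficients in the stated form, and substituting $t = \beta m + \alpha$ into $q^{t - m(d^\perp - s + i)} - 1$ produces the exponential term.

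For parts (3) and (4) I would apply the formula of part (2) at $r = 1$ and $r = 2$, obtaining the two explicit relations $A_{d+1} = -{n-d \brack 1} A_d + {n \brack d+1}(q^m - 1)$ and $A_{d+2} = q {n-d \brack 2} A_d + {n \brack d+2}\bigl[(q^{2m} - 1) - {d+2 \brack 1}(q^m - 1)\bigr]$. The upper bound on $A_d$ in (3) is immediate from $A_{d+1} \geq 0$; the lower bound follows from $A_{d+2} \geq 0$ together with the factorization $q^{2m} - 1 = (q^m - 1)(q^m + 1)$, which factors out $(q^m - 1)$ and yields the factor $({d+2 \brack 1} - q^m - 1)$. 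For part (4), the same two identities with $A_{d+1} = 0$ and $A_{d+2} = 0$ in turn are simply solved for $A_d$, which recovers the two stated expressions.

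The main obstacle, such as it is, is purely bookkeeping with Gaussian binomials. Several superficially different expressions, for example ${d^\perp \brack d^\perp - r}$ and ${n - d \brack r}$, or ${d^\perp + 1 \brack d^\perp - s}$ and ${n - d \brack r}$, must be recognized as equal via ${a \brack b} = {a \brack a - b}$, and the index shift in part (1) has to be chosen so as to align Theorem \ref{04-02-15} with the target $A_{d+r}$. Beyond this accounting, the proof uses nothing beyond Theorem \ref{04-02-15}, Proposition \ref{prop 1}, and the nonnegativity $A_i(\mC) \geq 0$ of the rank spectrum.
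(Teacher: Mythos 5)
Your proposal is correct and follows essentially the same route as the paper: parts (1) and (2) are obtained by specializing Theorem \ref{04-02-15} to the dually AMRD parameters (where the outer sum collapses to one or two terms), and parts (3) and (4) come from the $r=1$ and $r=2$ instances of part (2) together with $A_{d+1}\geq 0$ and $A_{d+2}\geq 0$, which are exactly the two identities the paper displays in its proof. The only difference is that you spell out the Gaussian-binomial bookkeeping that the paper dismisses as ``immediate.''
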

\begin{proof}
The proof of parts 1 and 2 are immediate by Theorem \ref{04-02-15}. The proof of parts 3 and 4 follow from
$$ A_{d+1}=-{n-d \brack 1}A_d+ {n \brack d+1}(q^{m}-1) \geq 0$$ and $$ A_{d+2}=q{n-d \brack 2}A_d+ {n \brack d+2}(q^{m}-1)\left ( q^m+1- {d+2 \brack 1}\right ) \geq 0.$$
\end{proof}
In Theorem 35 \cite{DC-G-R-L} it was proved that if $\mC$ is dually AMRD
and $m \mid t$, then the number of codewords of minimum rank in $\mC$ and $\mC^\perp$ are equal. We present a more general result through a different proof.

\begin{lemma} \label{lema 2} Let $\mC \leq (\F_q)_{n,m}$ be a $t$-dimensional dually AMRD code, with minimum distance $d$ and dual minimum distance $d^\perp$. Then
$$A_d=q^{t-m \lceil t/m\rceil} \left ( {n\brack \lceil t/m \rceil}+ {n- \lceil t/m\rceil+1\brack 1} A^\perp_{\lceil t/m\rceil-1} + A^\perp_{\lceil t /m \rceil} \right )- {n \brack \lceil t/m\rceil}.$$
More precisely, we have:
\begin{enumerate}
\item $A_d=(q^{\alpha-m}-1){n \brack \beta+1}+q^{\alpha-m}\left( { n-\beta \brack 1}A_{\beta}^\perp+A_{\beta+1}^\perp\right)$ with $d^\perp=\beta$, if $m \nmid t$ and $t=m\beta+\alpha > m$, where $\beta, \alpha \in \Z$ and $ 0 \leq \alpha < m$.
\item $A_d=A^\perp_{d^\perp}$, if  $m\mid t$.
\end{enumerate}

%
%
\end{lemma}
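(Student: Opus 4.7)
The plan is to prove the unified identity
\[
A_d = q^{t - m\lceil t/m\rceil}\left({n\brack \lceil t/m\rceil} + {n-\lceil t/m\rceil+1\brack 1}A^\perp_{\lceil t/m\rceil-1} + A^\perp_{\lceil t/m\rceil}\right) - {n\brack \lceil t/m\rceil}
\]
by a double-counting argument using, for each subspace $U\leq \F_q^n$, the $\F_q$-linear subspace $\mA(U):=\{A\in (\F_q)_{n,m} : \mathrm{col}(A)\subseteq U\}$ (where $\mathrm{col}(A)$ is the column span in $\F_q^n$); note $\dim_{\F_q}\mA(U)=m\dim U$. I will take $\dim U=\lceil t/m\rceil$. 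By Proposition~\ref{prop 1}(4), dually AMRD forces $d=n-\lceil t/m\rceil$ and $d^\perp=\lfloor t/m\rfloor$, which will control the surviving terms on both sides.

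The first ingredient is the trace-pairing identity $\mA(U)^\perp=\mA(U^\perp)$. Writing $A\in\mA(U)$ as $A=U_b M$ with $U_b$ a matrix whose columns form a basis of $U$, we have $\mathrm{Tr}(BA^t)=\mathrm{Tr}(U_b^tBM^t)$, which vanishes for every $M$ iff $U_b^tB=0$, i.e.\ iff $\mathrm{col}(B)\subseteq U^\perp$. Combined with $\dim(X\cap Y)=\dim X+\dim Y-\dim(X+Y)$ and $\dim(\mC^\perp+\mA(U))=nm-\dim(\mC\cap\mA(U^\perp))$, this yields
\[
\dim(\mC^\perp\cap\mA(U)) = m\lceil t/m\rceil - t + \dim(\mC\cap\mA(U^\perp)),
\]
so $|\mC^\perp\cap\mA(U)|=q^{m\lceil t/m\rceil-t}\,|\mC\cap\mA(U^\perp)|$.

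Next I would evaluate $\sum_U|\mC^\perp\cap\mA(U)|$ (over all $\lceil t/m\rceil$-dimensional $U\leq\F_q^n$) in two ways. Exchanging summations and counting $\lceil t/m\rceil$-dimensional subspaces containing a given column space gives $\sum_U|\mC^\perp\cap\mA(U)|=\sum_r A^\perp_r{n-r\brack \lceil t/m\rceil - r}$; since $A^\perp_r=0$ for $1\leq r<d^\perp=\lfloor t/m\rfloor$, only $r\in\{0,\,\lceil t/m\rceil-1,\,\lceil t/m\rceil\}$ can contribute, producing ${n\brack\lceil t/m\rceil}+{n-\lceil t/m\rceil+1\brack 1}A^\perp_{\lceil t/m\rceil-1}+A^\perp_{\lceil t/m\rceil}$ (the middle term vanishes automatically when $m\mid t$, because then $\lceil t/m\rceil-1=\beta-1<d^\perp$). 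The same direct count applied to $\mC$, summing over $W=U^\perp$ of dimension $n-\lceil t/m\rceil=d$ and using $A_r=0$ for $1\leq r<d$, gives $\sum_W|\mC\cap\mA(W)|={n\brack\lceil t/m\rceil}+A_d$.

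Combining the two evaluations via the step-2 duality and solving for $A_d$ yields the unified formula. Substituting $\lceil t/m\rceil=\beta$ with $A^\perp_{\beta-1}=0$ collapses it to $A_d=A^\perp_{d^\perp}$ in the $m\mid t$ case, and substituting $\lceil t/m\rceil=\beta+1$ with exponent $t-m(\beta+1)=\alpha-m$ produces the stated $m\nmid t$ formula. The only delicate point is the anticode duality $\mA(U)^\perp=\mA(U^\perp)$ and the resulting dimension count; once these are in hand, the remainder is elementary counting driven entirely by the rank-gap structure that dually AMRD imposes on $\mC$ and $\mC^\perp$.
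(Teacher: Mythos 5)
Your argument is correct and is in essence the paper's own proof: both reduce the lemma to the binomial-moment identity $\sum_{i}{n-i\brack \nu}A_i=q^{t-m\nu}\sum_{j}{n-j\brack \nu-j}A_j^\perp$ specialized at $\nu=\lceil t/m\rceil$, and then use $d=n-\lceil t/m\rceil$ and $d^\perp=\lfloor t/m\rfloor$ to see that only the terms $i\in\{0,d\}$ on one side and $j\in\{0,\lceil t/m\rceil-1,\lceil t/m\rceil\}$ on the other survive, after which the two specializations follow by substitution. The only difference is that you re-derive the needed instance of this identity from scratch via the anticode duality $\mA(U)^\perp=\mA(U^\perp)$ and a double count over $\lceil t/m\rceil$-dimensional subspaces, whereas the paper simply quotes it as Theorem~31 of \cite{Alberto}.
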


\begin{proof}
 We know that $d=n - \lceil t/m \rceil$ and $d^\perp= \lfloor t/m \rfloor$. By Theorem 31 \cite{Alberto} we have $$ \sum_{i=0}^{n-\nu}{n-i \brack \nu } A_i= q^{t-m\nu} \sum_{j=0}^{\nu} {n-j \brack \nu-j} A^\perp_{j},$$ for all $ 0 \leq \nu  \leq n$. In particular, for $\nu=\lceil t/m \rceil$ we have
$$ \sum_{i=0}^{n- \lceil t/m \rceil}{n-i \brack \lceil t/m \rceil } A_i= q^{t-m\lceil t/m \rceil)} \sum_{j=0}^{\lceil t/m \rceil} {n-j \brack \lceil t/m \rceil-j} A^\perp_{j}.$$ Therefore

$$ {n \brack \lceil t/m\rceil}+A_d=q^{t-m \lceil t/m\rceil} \left ( {n\brack \lceil t/m \rceil}+ {n- \lceil t/m\rceil+1\brack 1} A^\perp_{\lceil t/m \rceil-1} + A^\perp_{\lceil t /m \rceil} \right ) .$$

\end{proof}
\begin{theorem}\label{teo-coef} Let $\mC \leq (\F_q)_{n,m}$ be a $t$-dimensional AMRD code with $t=m\beta+\alpha > m$, where $\beta, \alpha \in \Z$ and $ 0 \leq \alpha < m$. The following hold:
\begin{enumerate}
\item If $ m\nmid t$, then $\mC$ is dually AMRD if and only if $$A_d=(q^{\alpha-m}-1){n \brack \beta+1}+q^{\alpha-m}\left( { n-\beta \brack 1}A_{\beta}^\perp+A_{\beta+1}^\perp\right)\;\; with\;\; A_{\beta}^\perp \neq 0.$$
\item If $m \mid t$, then $\mC$ is dually AMRD if and only if $A_d=A^\perp_{d^\perp}.$
\end{enumerate}

\end{theorem}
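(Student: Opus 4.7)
The plan is to handle the forward implication by invoking Lemma \ref{lema 2} directly, and the reverse implication by running the duality identity of Theorem 31 of \cite{Alberto} against the hypothesized formula, to force the intermediate dual weight counters to vanish. The key tool throughout is the identity
\[
\sum_{i=0}^{n-\nu}{n-i \brack \nu } A_i= q^{t-m\nu} \sum_{j=0}^{\nu} {n-j \brack \nu-j} A^\perp_{j},
\]
specialized to $\nu=\lceil t/m\rceil$, which is exactly the tool used in the proof of Lemma \ref{lema 2}.

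For the ``only if'' direction in both parts, assume $\mC$ is dually AMRD. Then Lemma \ref{lema 2} already delivers the stated formula for $A_d$. In part~(1) we additionally observe that by Proposition \ref{prop 1}(4) we have $d^\perp=\beta$, so that $A_\beta^\perp=A_{d^\perp}^\perp\neq 0$, completing the hypothesis.

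For the ``if'' direction of part (1) ($m\nmid t$), $\mC$ is AMRD with $d=n-\beta-1$, so plugging $\nu=\beta+1$ into the identity collapses the left-hand side to ${n\brack \beta+1}+A_d$ (only $i=0$ and $i=d=n-\nu$ contribute). Substituting the hypothesized value of $A_d$ and dividing by $q^{\alpha-m}$, the equation simplifies to
\[
\sum_{j=1}^{\beta-1}{n-j\brack \beta+1-j}A_j^\perp=0.
\]
Since every Gaussian binomial is positive and each $A_j^\perp\ge 0$, this forces $A_j^\perp=0$ for $1\le j\le \beta-1$. Combined with the hypothesis $A_\beta^\perp\ne 0$, we obtain $d^\perp=\beta$, and Proposition \ref{prop 1}(5) then yields that $\mC^\perp$ is AMRD, so $\mC$ is dually AMRD.

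For the ``if'' direction of part~(2) ($m\mid t$, $\alpha=0$, $d=n-\beta$), the same identity with $\nu=\beta$ gives
\[
A_d={n-d^\perp\brack \beta-d^\perp}A_{d^\perp}^\perp+\sum_{j=d^\perp+1}^{\beta}{n-j\brack \beta-j}A_j^\perp.
\]
First note that $d^\perp\le \beta$: otherwise rdef$(\mC^\perp)=0$, and since $m\mid \dim \mC^\perp$ this would make $\mC^\perp$, hence $\mC$, MRD, contradicting that $\mC$ is AMRD. The main obstacle, and the delicate point, is to exclude $d^\perp<\beta$: using $t<nm$ (so $\beta<n$) together with $d^\perp<\beta$, the index pair $(n-d^\perp,\beta-d^\perp)$ satisfies $0<\beta-d^\perp<n-d^\perp$, so the Gaussian binomial ${n-d^\perp\brack \beta-d^\perp}$ is strictly greater than $1$; since $A_{d^\perp}^\perp>0$ and the remaining terms are nonnegative, we would get $A_d>A_{d^\perp}^\perp$, contradicting the hypothesis. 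Hence $d^\perp=\beta$, and Proposition \ref{prop 1}(5) concludes that $\mC$ is dually AMRD.
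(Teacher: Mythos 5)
Your proof is correct and follows essentially the same route as the paper: the forward direction via Lemma \ref{lema 2} (plus Proposition \ref{prop 1} to get $d^\perp=\beta$), and the converse by specializing the MacWilliams-type identity of \cite{Alberto} at $\nu=\lceil t/m\rceil$ and forcing the intermediate dual weight counters to vanish. Your handling of the two small points the paper leaves implicit (why $d^\perp\le\beta$ in part 2, and why the relevant Gaussian binomials exceed $1$) is accurate and only cosmetically different from the paper's contradiction arguments.
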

\begin{proof} \begin{enumerate}
\item Let $A_d=(q^{\alpha-m}-1){n \brack \beta+1}+q^{\alpha-m}\left( { n-\beta \brack 1}A_{\beta}^\perp+A_{\beta+1}^\perp\right)$ with $A_{\beta}^\perp \neq 0$. By the Singleton bound $d^\perp \leqslant \beta+1$. Therefore, $d^\perp \leqslant \beta$. Since $\mC$ is AMRD, then $d=n-(\beta+1)$ and we have $$\sum_{i=0}^{n-(\beta+1)}{n-i \brack \beta+1 } A_i= q^{\alpha-m} \sum_{j=0}^{\beta+1} {n-j \brack \beta+1-j} A^\perp_{j}.$$ It follows $A_d=(q^{\alpha-m}-1){n \brack \beta+1}+q^{\alpha-m} \sum_{j=d^\perp}^\beta { n-j \brack \beta+1-j}A_{j}^\perp+q^{\alpha-m}A_{\beta+1}^\perp.$ Suppose $d^\perp< \beta $, then
$$A_d=(q^{\alpha-m}-1){n \brack \beta+1}+q^{\alpha-m} \sum_{j=d^\perp}^{\beta-1} { n-j \brack \beta+1-j}A_{j}^\perp+q^{\alpha-m} {n-\beta \brack 1}A_{\beta} ^\perp+q^{\alpha-m}A_{\beta+1}^\perp.$$ Then we have a contradiction since $\sum_{j=d^\perp}^{\beta-1} { n-j \brack \beta+1-j}A_{j}^\perp>0.$ Hence $d^\perp=\beta $ and $\mC^\perp$ is AMRD.

\item Let $A_d=A^\perp_{d^\perp}$ and $\delta=t/m \in \Z$. Since $\mC$ is AMRD, then $d=n-\delta$. Therefore $$\sum_{i=0}^{n- \delta}{n-i \brack \delta } A_i=  \sum_{j=0}^{\delta} {n-j \brack \delta-j} A^\perp_{j}.$$ Then we have $A_d=\sum_{j=1}^{\delta} {n-j \brack \delta-j} A^\perp_{j}.$  By the Singleton bound $d^\perp\leq \delta+1$. If $d^\perp=\delta+1$, then $A_d=0$, a contradiction. On the other hand, if $d^\perp <\delta$, then $A_d= {n-d^\perp \brack \delta-d^\perp}A_d+\sum_{j=d^\perp+1}^{\delta} {n-j \brack \delta-j} A^\perp_{j},$ a contradiction again. Hence $d^\perp=\delta$ and by Proposition \ref{prop 1} (5) we have $\mC^\perp$ is AMRD.
\end{enumerate}
\end{proof}
\begin{example}{\rm

\begin{enumerate}
\item In the Example \ref{dually-ex} we have $d=1$, $d^\perp=1$, $A_1=6$, $A_2=7$, $A_3=2$, $A^{\perp}_1=9$, $A^{\perp}_2=18$, $A_3^{\perp}=4$
and $A_d=(\frac{1}{4}-1) {3 \brack 2}+\frac{1}{4}\left( {2 \brack 1}9+18\right)$, which says that $\mC$ is dually AMRD.
 \item If $q=3$, $m=n=3$, $t=3$ and $$\mC=\left\langle \left(
                           \begin{array}{ccc}
                             1 & 0 & 0 \\
                             0 & 1 & 0 \\
                             0 & 0 & 0 \\
                           \end{array}
                         \right), \left(
                           \begin{array}{ccc}
                             0 & 1 & 0 \\
                             0 & 0 & 1 \\
                             1 & 0 & 0 \\
                           \end{array}
                         \right), \left(
                           \begin{array}{ccc}
                             0 & 0 & 1 \\
                             1 & 0 & 0 \\
                             0 & 1 & 0 \\
                           \end{array}
                         \right)
 \right\rangle,$$ then $\left(
                           \begin{array}{ccc}
                             0 & 0 & 0 \\
                             0 & 0 & 0 \\
                             0 & 0 & 2 \\
                           \end{array}
                         \right) \in \mC^\perp$, $d= 2$, $d^\perp=1$, $\mC$ is dually AMRD and $A_2=A_1^\perp=6$.
\end{enumerate} }

\end{example}

\begin{remark} {\rm It is possible that $\mC$ is a $t$-dimensional AMRD code, $t\geq m$, $m \nmid t$ and $A_d=(q^{\alpha-m}-1){n \brack \beta+1}+q^{\alpha-m}
\left( { n-\beta \brack 1}A_{\beta}^\perp+A_{\beta+1}^\perp\right)$, but $\mC$ is not dually AMRD. For example, if $q=2$, $m=n=3$, $t=5$ and $$\mC=\left\langle \left(
                           \begin{array}{ccc}
                             1 & 0 & 0 \\
                             0 & 0 & 0 \\
                             0 & 0 & 1 \\
                           \end{array}
                         \right), \left(
                           \begin{array}{ccc}
                             0 & 1 & 0 \\
                             0 & 1 & 1 \\
                             0 & 1 & 0 \\
                           \end{array}
                         \right), \left(
                           \begin{array}{ccc}
                             0 & 0 & 1 \\
                             0 & 1 & 0 \\
                             0 & 0 & 0 \\
                           \end{array}
                         \right),\left(
                           \begin{array}{ccc}
                             0 & 0 & 0 \\
                             1 & 0 & 0 \\
                             0 & 1 & 0 \\
                           \end{array}
                         \right),\left(
                           \begin{array}{ccc}
                             0 & 0 & 0 \\
                             0 & 0 & 0 \\
                             1 & 0 & 0 \\
                           \end{array}
                         \right)
 \right\rangle,$$ then $d=1$, $d^\perp=2$, $A_1=1$, $A_2=18$, $A_3=12$, $A^{\perp}_2=9$, $A_3^{\perp}=6$. Therefore $\mC$ is AMRD but it is not dually AMRD, even though $A_d=A_1=(2^{-1}-1){ 3 \brack 2}+2^{-1} \left( 9 \right)$. Note that this does not contradict the Theorem \ref{teo-coef} because in this case $A^\perp_{\beta}(\mC^\perp)=0$.}
\end{remark}
\begin{lemma} Let $\mC \leq (\F_q)_{n,m}$ be a self-dual AMRD code. The following hold:
\begin{enumerate}
\item If $2 \nmid n$, then $d=\frac{n-1}{2}$, $m \equiv 0 \mod 2$ and $A_{d+1}=(q^{m/2}-{\frac{n+1}{2} \brack 1})A_d-(1-q^{m/2}) {n \brack \frac{n+1}{2}}$.

\item If $2 \mid n$, then $d=n/2.$
\end{enumerate}

\end{lemma}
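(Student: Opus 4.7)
The plan is to use that self-duality forces $t = nm/2$ combined with the AMRD distance formula $d = n-\lceil t/m\rceil$, and then to exploit the MacWilliams-type identity used in the proof of Lemma \ref{lema 2}.

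First I would handle part (2). If $n$ is even, then $n/2\in\Z$ and $\lceil t/m\rceil = \lceil n/2\rceil = n/2$, so the AMRD property gives $d=n-n/2=n/2$ immediately.

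Next, for part (1), assume $n$ is odd. Self-duality requires $t=nm/2\in\Z$, so $nm$ is even; since $n$ is odd this forces $m\equiv 0\pmod 2$. Then $t/m=n/2\notin\Z$, so $\lceil t/m\rceil=(n+1)/2$, and the AMRD property gives $d=n-(n+1)/2=(n-1)/2$.

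For the identity on $A_{d+1}$, I would apply the MacWilliams-type relation used in the proof of Lemma \ref{lema 2}, namely
\[
\sum_{i=0}^{n-\nu}{n-i\brack \nu}A_i \;=\; q^{t-m\nu}\sum_{j=0}^{\nu}{n-j\brack \nu-j}A^{\perp}_{j},
\]
with the choice $\nu=d+1=(n+1)/2$. Since $d^\perp=d=(n-1)/2$, the vanishing $A_i=A^{\perp}_i=0$ for $1\le i\le d-1$ leaves only a handful of nonzero terms. On the left, the admissible indices are $i=0$ and $i=d$ (note $n-\nu=d$), and one has ${n-d\brack d+1}_q={(n+1)/2\brack(n+1)/2}_q=1$, giving $\mathrm{LHS}={n\brack d+1}+A_d$. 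On the right, only $j=0,d,d+1$ contribute, and with $t-m(d+1)=-m/2$ this yields $\mathrm{RHS}=q^{-m/2}\bigl({n\brack d+1}+{n-d\brack 1}A^\perp_d+A^\perp_{d+1}\bigr)$. Using $A^\perp_d=A_d$ and $A^\perp_{d+1}=A_{d+1}$ and solving for $A_{d+1}$ gives
\[
A_{d+1} = \bigl(q^{m/2}-{\tfrac{n+1}{2}\brack 1}\bigr)A_d - (1-q^{m/2}){n\brack \tfrac{n+1}{2}},
\]
as claimed (using $n-d=d+1=(n+1)/2$).

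The only slightly delicate step is bookkeeping the surviving indices in the MacWilliams sum and confirming the Gaussian coefficient simplification ${n-d\brack d+1}_q=1$; once those collapse, everything is linear algebra. I do not expect any real obstacle since the identity and the vanishing pattern of $A_i$ (forced by the self-dual AMRD hypothesis) completely determine $A_{d+1}$ in terms of $A_d$.
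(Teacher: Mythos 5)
Your proof is correct and takes essentially the same route as the paper: the paper obtains the $A_{d+1}$ identity by appealing to Lemma~\ref{lema 2}, whose own proof is exactly the MacWilliams-type identity you invoke, with the same choice $\nu=\lceil t/m\rceil=d+1$ and then specialized via $A^\perp_i=A_i$. You have merely inlined that computation instead of citing the lemma, and your bookkeeping of the surviving indices and the simplification ${n-d\brack d+1}=1$ is accurate.
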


\begin{proof} We know that $2 \mid n$ if and only if $m \mid t=\frac{nm}{2}.$ Therefore, if $2 \nmid n$, then $d=n- \lceil t/m\rceil=\frac{n-1}{2}$. The rest of the statement in part 1 follows from Lemma \ref{lema 2}. Similarly, if $2 \mid n$, then $d=n- \lceil t/m\rceil=\frac{n}{2}.$

\end{proof}

\begin{corollary} Let $\mC \leq (\F_q)_{n,m}$ be a self-dual AMRD code and $\chara(\F_q)=2$. Then $\mC$ has parameters
\begin{enumerate}
 \item $n=3$, $t=3\frac{m}{2}$, $d=1$ and $4 \leq m \equiv 0 \mod 2$ or
 \item $n=2$, $t=m$ and $d=1$.
\end{enumerate}
\end{corollary}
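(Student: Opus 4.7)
The plan is to reduce the corollary to ruling out $n\geq 4$. By the previous lemma the minimum distance of a self-dual AMRD code is $n/2$ when $n$ is even and $(n-1)/2$ when $n$ is odd (with $m$ even in the odd case); the two parameter sets listed in the corollary correspond precisely to $d=1$, i.e.\ to $n\in\{2,3\}$. So the entire content of the corollary is that characteristic $2$ excludes $n\geq 4$, after which the remaining data is simply read off.

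The essential input from characteristic $2$ is a structural one. Since $\mC=\mC^{\perp}$, every $A\in\mC$ satisfies $\tr(AA^{t})=\sum_{i,j}a_{ij}^{2}=0$, and the Frobenius identity $\sum_{i,j}a_{ij}^{2}=\bigl(\sum_{i,j}a_{ij}\bigr)^{2}$, valid in characteristic $2$, forces $\sum_{i,j}a_{ij}=0$ for every codeword. Hence $\mC$ lies inside the hyperplane $H=\{A\in V:\sum_{i,j}a_{ij}=0\}$ of $V=(\F_{q})_{n,m}$; in particular all $A_{d}$ minimum-weight codewords are rank-$d$ matrices in $H$.

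Next I would exclude $n\geq 4$ case by case. For even $n\geq 4$ we have $m\mid t$ and $d=n/2$: I would combine the upper bound on $A_{d}$ from Proposition \ref{prop 0}(3) with an explicit count of rank-$d$ matrices in $H$, arguing that the forced value of $A_{d}$ (together with the self-orthogonality of the subspace spanned by the weight-$d$ codewords, which is a stronger constraint than merely lying in $H$) exceeds what is available. For odd $n\geq 5$ with $m$ even we have $m\nmid t$ and $d=(n-1)/2$: here I would use the $A_{d+1}$-formula of the previous lemma, together with the non-negativity $A_{d+2}\geq 0$ from Proposition \ref{prop 0}(1) at $r=2$, and the same hyperplane bound, to eliminate every admissible value $A_{d}\geq 1$.

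Once $n\geq 4$ is excluded, only $n\in\{2,3\}$ remain and the parameters follow at once. For $n=2$ the even case of the previous lemma yields $d=1$, and self-duality gives $t=nm/2=m$, which is case (2). For $n=3$ the odd case gives $d=1$ and $m$ even; the standing assumption $n\leq m$ together with parity forces $m\geq 4$, and $t=nm/2=3m/2$, which is case (1). The main obstacle is the exclusion of $n\geq 4$: the weight-enumerator identities of Proposition \ref{prop 0} alone admit consistent rank distributions in those ranges, so the hyperplane constraint from characteristic $2$ self-orthogonality must be used in an essential way, and turning the counting of rank-$d$ matrices inside $H$ into a quantitative contradiction is the delicate step.
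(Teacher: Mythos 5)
There is a genuine gap, and it is an ironic one: you have already derived the key fact but stop short of using it. Your observation that every $A\in\mC$ satisfies $\sum_{i,j}a_{ij}=0$ is exactly the statement that the all-ones matrix $J\in(\F_q)_{n,m}$ satisfies $\mathrm{Tr}(JA^t)=0$ for all $A\in\mC$, i.e.\ that $J\in\mC^{\perp}=\mC$. Since $J$ has rank $1$, this immediately gives $d(\mC)=1$, and the previous lemma then forces $n/2=1$ or $(n-1)/2=1$, so $n\in\{2,3\}$ and the listed parameters follow (with $m\geq 4$ in the odd case from $n\leq m$ and $m$ even). This is precisely the paper's proof, which invokes the theorem of Nebe and Willems that the all-ones matrix lies in any self-dual rank-metric code in characteristic $2$. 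Instead of drawing this one-line conclusion, you reroute through an exclusion of $n\geq 4$ by counting rank-$d$ matrices in the hyperplane $H$ against the bounds of Proposition \ref{prop 0}, and you concede that making this quantitative is "the delicate step." That step is not carried out, so the proof is incomplete; worse, as described it is unlikely to work, because the number of rank-$d$ matrices lying in a fixed hyperplane of $(\F_q)_{n,m}$ is roughly a $1/q$ fraction of all rank-$d$ matrices and vastly exceeds any admissible value of $A_d$, so no contradiction can come from that count alone. The fix is simply to replace the entire $n\geq 4$ analysis by the remark that $\mC\subseteq H$ is equivalent to $J\in\mC^{\perp}=\mC$ and hence $d=1$.
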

\begin{proof} By \cite[Theorem 1]{Nebe-Willems} the all-ones matrix is in a self-dual code $\mC$. Therefore $d(\mC)=1$.
\end{proof}
\begin{example} Let $q=2$, $n=2$, $m=3$ and $$\mC=\left\langle \left(
                           \begin{array}{ccc}
                             1 & 1 & 0 \\
                             0 & 0 & 0 \\
                           \end{array}
                         \right), \left(
                           \begin{array}{ccc}
                             0 & 0 & 0 \\
                             1 & 1 & 0 \\
                           \end{array}
                         \right), \left(
                           \begin{array}{ccc}
                             0 & 0 & 1 \\
                             0 & 0 & 1 \\
                           \end{array}
                         \right)
 \right\rangle.$$ Then $\mC$ is a $3$-dimensional self-dual AMRD code.

\end{example}
 Since in general $\lambda _{\mathcal{B}}(C^\perp) \neq \lambda _{\mathcal{B}}(C)^\perp$, if $C$ is a self-dual AMRD code, the associated code $\lambda _{\mathcal{B}}(C)\leq (\F_q)_{n,m}$ is not necessarily a self-dual AMRD code. However if $\chara(\F_q)=2$, the statement is true.

\begin{theorem} Let $C \leq \Fqmn$ be a $k$-dimensional self-dual AMRD code. The following hold:
\begin{enumerate}
\item $\dim_{\F_q}(\lambda _{\mathcal{B}}(C))= \dim_{\F_q}(\lambda _{\mathcal{B}}(C))^\perp.$
\item $C$ is a $[2d, d,d]$ code.
\item If $n=2$, then $C$ is a $[2,1,1]$ code with $C=\langle(\alpha, 1)\rangle$, where $\alpha \in \Fq$ and $\alpha^2=-1$.
\item If $\chara(\F_q)=2$, then $C=\langle(1, 1)\rangle$ and $\lambda _{\mathcal{B}}(C)\leq (\F_q)_{2,m}$ is a self-dual code. 
\end{enumerate}

\end{theorem}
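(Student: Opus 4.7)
The plan is to handle the four parts in sequence, with parts~1 and~2 essentially falling out of dimension counts once the hypotheses are unpacked. For part~1, I would note that $C=C^{\perp}$ forces $\dim_{\Fqm}(C)=\dim_{\Fqm}(C^{\perp})$, which combined with $\dim_{\Fqm}(C)+\dim_{\Fqm}(C^{\perp})=n$ gives $n=2k$; then $\dim_{\Fq}\lambda_{\mathcal{B}}(C)=mk=m(n-k)=\dim_{\Fq}\lambda_{\mathcal{B}}(C)^{\perp}$. For part~2, the AMRD condition $\rdef(C)=1$ gives $d=n-k$, and substituting $n=2k$ yields $k=d$ and $n=2d$, so $C$ has parameters $[2d,d,d]$.

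For part~3 I would specialise to $n=2$: part~2 forces $k=d=1$, so $C=\langle(a,b)\rangle$ for some nonzero rank-one $(a,b)\in\Fqm^{2}$. A quick case distinction dismisses $a=0$ or $b=0$, since in either case $C^{\perp}$ would be spanned by the opposite standard basis vector and could not equal $C$. Thus both entries are nonzero and we may rescale to $C=\langle(\alpha,1)\rangle$; the rank-one condition $\dim_{\Fq}\langle\alpha,1\rangle=1$ forces $\alpha\in\Fq$. Computing $C^{\perp}=\langle(1,-\alpha)\rangle$ and demanding $(1,-\alpha)\in C$ then reduces to the single equation $\alpha^{2}=-1$.

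For part~4 I first specialise part~3 to characteristic~$2$: there $\alpha^{2}=-1$ becomes $(\alpha-1)^{2}=0$, so $\alpha=1$ and $C=\langle(1,1)\rangle$. To prove self-duality of $\lambda_{\mathcal{B}}(C)$, I would describe this space explicitly as the $m$-dimensional $\Fq$-subspace of $(\Fq)_{2,m}$ consisting of the matrices whose two rows coincide. For two such matrices $A,B$ with common rows $a,b\in\Fq^{m}$, a direct computation shows that every entry of $AB^{t}$ equals $\langle a,b\rangle$, so $\tr(AB^{t})=2\langle a,b\rangle=0$ in characteristic~$2$; hence $\lambda_{\mathcal{B}}(C)\subseteq\lambda_{\mathcal{B}}(C)^{\perp}$, and equality follows from the dimension count $\dim_{\Fq}\lambda_{\mathcal{B}}(C)^{\perp}=2m-m=m$.

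The step that requires the most care is part~3, where one must dismiss the zero-coordinate cases cleanly, justify that rank one forces $\alpha\in\Fq$ (and not just in $\Fqm$), and correctly translate $C=C^{\perp}$ into the algebraic identity $\alpha^{2}=-1$. Parts~1, 2, and~4 are essentially bookkeeping plus a short trace computation, and no substantive obstacle is expected.
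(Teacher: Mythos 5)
Parts 1--3 of your argument are correct and essentially identical to the paper's (for part 3 the paper extracts $\alpha^{2}=-1$ directly from $x^{2}+y^{2}=0$ rather than by computing $C^{\perp}=\langle(1,-\alpha)\rangle$, but this is the same computation). The gap is in part 4: the hypothesis there is only $\chara(\F_q)=2$, not $n=2$, so you cannot ``specialise part 3'' until you have first shown that characteristic $2$ forces $n=2$. As written, your proof says nothing to rule out, say, a self-dual AMRD $[4,2,2]$ code over a field of characteristic $2$, whereas the theorem implicitly asserts that no such code exists.

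The missing step is exactly what the paper's proof supplies. Since $C=C^{\perp}$, every $c=(c_1,\dots,c_n)\in C$ satisfies $0=c\cdot c=\sum_i c_i^{2}=\bigl(\sum_i c_i\bigr)^{2}$ in characteristic $2$, hence $\sum_i c_i=0$, i.e.\ the all-ones vector $\overline{1}$ lies in $C^{\perp}=C$. Since $\rank(\overline{1})=1$, this gives $d(C)=1$, and part 2 then forces $C$ to be a $[2,1,1]$ code, so $n=2$ and $C=\langle(1,1)\rangle$ (being one-dimensional and containing $\overline{1}$). With this inserted, the remainder of your part 4 --- the description of $\lambda_{\mathcal{B}}(C)$ as the matrices with two equal rows, the computation $\tr(AB^{t})=2\langle a,b\rangle=0$, and the dimension count from part 1 --- is exactly the paper's argument and is correct.
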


\begin{proof}
\begin{enumerate}
\item $\dim_{\F_q}(\lambda _{\mathcal{B}}(C))=m\cdot k=\frac{nm}{2}$ and $\dim_{\F_q}(\lambda _{\mathcal{B}}(C)^\perp)=nm-\dim_{\F_q}(\lambda _{\mathcal{B}}(C))=nm- mk=nm-\frac{nm}{2}=\frac{nm}{2}$.
\item Since $d=n-k=n/2$ and $k=d$, the result easily follows.
\item By part 2 we have that $C$ is a $[2,1,1]$ code. Let $C=\langle(x, y)\rangle$, where $x,y\in \Fqm$. Then $x^2+y^2=0$ and $\dim_{\F_q}\langle x,y\rangle=1$. Therefore there exists $\alpha \in \F_q$ such that $x=\alpha y$ and we have $(x,y)=y(\alpha,1)$ with $1+\alpha^2=0.$
\item If $\chara(\F_q)=2$, then $\overline{1}:=(1, \ldots, 1)\in C$. In fact, if $c=(c_1, \ldots, c_n)\in C$, then $0=c \cdot c=\sum_{i=1}^n c_i^2=\left(\sum_{i=1}^n c_i \right)^2$. Therefore $0=\sum_{i=1}^n c_i=\overline{1} \cdot c$. Hence we have $d(C)=1$ and $C=\langle(1, 1)\rangle$ is a $[2,1,1]$ code. On the other hand, if $\lambda _{\mathcal{B}}(c)$, $\lambda _{\mathcal{B}}(c') \in \lambda _{\mathcal{B}}(C)$, then $\lambda _{\mathcal{B}}(c)=\left(
                                                           \begin{array}{ccc}
                                                             \lambda_{11} & \ldots & \lambda_{1m}\\
                                                             \lambda_{11} & \ldots & \lambda_{1m}\\
                                                           \end{array}
                                                         \right)
    \in (\F_q)_{2,m}$ and $\lambda _{\mathcal{B}}(c')=\left(
                                                           \begin{array}{ccc}
                                                             \lambda'_{11} & \ldots & \lambda'_{1m}\\
                                                             \lambda'_{11} & \ldots & \lambda'_{1m}\\
                                                           \end{array}
                                                         \right)
    \in (\F_q)_{2,m}$. Therefore, $\textrm{Tr}\left(\lambda _{\mathcal{B}}(c) \lambda _{\mathcal{B}}(c')^T\right)=0$ and $\lambda _{\mathcal{B}}(C) \subseteq \lambda _{\mathcal{B}}(C)^\perp$. Since $\dim_{\F_q}(\lambda _{\mathcal{B}}(C))= \dim_{\F_q}(\lambda _{\mathcal{B}}(C)^\perp)$, then $$\lambda _{\mathcal{B}}(C)= \lambda _{\mathcal{B}}(C)^\perp.$$     
\end{enumerate}
\end{proof}

\begin{lemma} \label{lema 3} Let $\mC \leq (\F_q)_{n,m}$ be a $t$-dimensional dually AMRD code, with minimum distance $d$ and $t=\frac{nm}{2}=dm$. Then $\mC$ is a formally self-dual code. In particular, if $C \leq \Fqmn$ is a dually AMRD $\Fqm$-linear $[2d,d, d]$ code with $n=2d\leq m$, then $C$ is formally self-dual.
\end{lemma}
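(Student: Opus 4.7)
The plan is to pin down, from the hypotheses, that $\mC$ and $\mC^\perp$ have identical parameters $(n,m,t,d,d^\perp)$ together with identical value of $A_d$, and then invoke Theorem \ref{04-02-15} to conclude that their whole rank distributions coincide. First, since $t=dm=nm/2$ we have $m\mid t$ and $t/m=d=n/2$, so $\beta=n/2$ in the notation of Proposition \ref{prop 1}. Part 4(a) of Proposition \ref{prop 1} then gives $d(\mC)=n-\beta=n/2$ and $d^\perp=\beta=n/2$; that is, $d=d^\perp=n/2$. In particular $\dim\mC=\dim\mC^\perp=nm/2$, and the dual distance of $\mC^\perp$ (which is $d$) equals the dual distance of $\mC$ (which is $d^\perp$).

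Next I apply Lemma \ref{lema 2}(2): in the case $m\mid t$ it says $A_d(\mC)=A^\perp_{d^\perp}(\mC^\perp)$, i.e.\ $A_{n/2}(\mC)=A_{n/2}(\mC^\perp)$. Now observe that the range of ``initial coefficients'' required by Theorem \ref{04-02-15} for $\mC$ is $A_d(\mC),\ldots,A_{n-d^\perp}(\mC)$, and since here $d=n-d^\perp=n/2$ this range collapses to the single value $A_{n/2}(\mC)$. The analogous range for $\mC^\perp$ collapses to $A_{n/2}(\mC^\perp)$, and these two values agree by Lemma \ref{lema 2}(2). Feeding the common inputs $(n,m,t,d,d^\perp,A_{n/2})$ into the recursion of Theorem \ref{04-02-15} produces the same sequence $A_{d+1},\ldots,A_n$ for both codes, so $A_i(\mC)=A_i(\mC^\perp)$ for every $i$. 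This proves $\mC$ is formally self-dual.

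For the second assertion, let $C\leq\Fqmn$ be a dually AMRD $[2d,d,d]$ code with $n=2d\leq m$ and let $\mathcal{B}$ be any basis of $\F_{q^m}$ over $\F_q$. Then $\lambda_{\mathcal{B}}(C)\leq(\F_q)_{n,m}$ has $\F_q$-dimension $m\cdot d=nm/2$, it is dually AMRD by definition, and its minimum rank distance equals $d=n/2=dm/m$, so the first part applies and $\lambda_{\mathcal{B}}(C)$ is formally self-dual. Finally I use the two standard facts recorded in Section 2: the rank distribution of $C$ equals that of $\lambda_{\mathcal{B}}(C)$, and (by the result of \cite{Alberto}) the rank distribution of $C^\perp$ equals that of $\lambda_{\mathcal{B}}(C)^\perp$. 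Chaining these three identities gives $A_i(C)=A_i(C^\perp)$ for all $i$, which is the formal self-duality of $C$.

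No serious obstacle is expected; the only point that needs care is checking that the parameter $d^\perp$ coming from Proposition \ref{prop 1}(4a) really equals $d$, which forces $n-d^\perp=d$ and thereby shrinks the input data of Theorem \ref{04-02-15} to the single coefficient controlled by Lemma \ref{lema 2}(2).
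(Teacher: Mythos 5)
Your proposal is correct and follows essentially the same route as the paper: both establish $d=d^\perp=n/2$, use Lemma \ref{lema 2}(2) (equivalently Theorem \ref{teo-coef}(2)) to get $A_d(\mC)=A_{d}(\mC^\perp)$, and then conclude via the determination of the full rank distribution from the parameters and this single coefficient --- the paper cites Proposition \ref{prop 0}(2), which is exactly the collapsed instance of Theorem \ref{04-02-15} you describe. Your explicit handling of the $\Fqm$-linear case via $\lambda_{\mathcal{B}}$ and the result of \cite{Alberto} is a welcome bit of extra care that the paper leaves implicit.
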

\begin{proof}
 Since $d=n-t/m$ and $d^\perp=t/m$, then $d=d^\perp$. By Theorem \ref{teo-coef} (2) we have $A_d(\mC)=A_{d^\perp}(\mC^\perp)=A_{d}(\mC^\perp)$. Therefore by Proposition \ref{prop 0} (2) we have $A_i(\mC)=A^\perp_{i}(\mC^\perp)$ for $i=0,\ldots, n$.
\end{proof}

\begin{lemma} \label{lema 4}
Let $\mC \leq (\F_q)_{n,m}$ be a $t$-dimensional dually AMRD code, with minimum distance $d$ and dual minimum distance $d^\perp$. If $m \mid t$ and $A_{d+1}=0$, then $t \leq \frac{nm}{2}$. In particular if $C \leq \Fqmn$ is a  dually AMRD $\Fqm$-linear $[n,k, d]$ code and $A_{d+1}=0$, then $k \leq n/2$.
\end{lemma}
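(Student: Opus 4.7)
The plan is to chain together three ingredients already in the paper: the exact value of $A_d$ forced by $A_{d+1}=0$ (Proposition~\ref{prop 0}~(4)), the equality $A_d = A^\perp_{d^\perp}$ available for dually AMRD codes with $m\mid t$ (Theorem~\ref{teo-coef}~(2)), and the upper bound on $A^\perp_{d^\perp}$ obtained by applying Proposition~\ref{prop 0}~(3) to $\mC^\perp$. The payoff will be an inequality between two Gaussian-binomial expressions that, after a short simplification, collapses to $n\leq 2d$.

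First I would record that, since $\mC$ is dually AMRD and $m\mid t$, Proposition~\ref{prop 1}~(4)(a) gives $d = n - t/m$ and $d^\perp = t/m$, so $d + d^\perp = n$. Next, Proposition~\ref{prop 0}~(4) together with the hypothesis $A_{d+1}=0$ pins down
\[
A_d \;=\; \frac{{n\brack d+1}}{{n-d\brack 1}}\,(q^{m}-1),
\]
while Theorem~\ref{teo-coef}~(2) yields $A_d = A^\perp_{d^\perp}$. Since $\mC^\perp$ is itself dually AMRD and its dimension $nm-t$ is still divisible by $m$, the upper bound in Proposition~\ref{prop 0}~(3) applies to $\mC^\perp$ and gives
\[
A^\perp_{d^\perp} \;\leq\; \frac{{n\brack d^\perp+1}}{{n-d^\perp\brack 1}}\,(q^{m}-1).
\]
Combining these three facts and cancelling $q^m-1$ leaves the key inequality
\[
\frac{{n\brack d+1}}{{n-d\brack 1}} \;\leq\; \frac{{n\brack d^\perp+1}}{{n-d^\perp\brack 1}}.
\]

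The last step is a short Gaussian-binomial manipulation. Using $d^\perp = n-d$ together with the symmetry ${n\brack n-d+1}={n\brack d-1}$ and ${n-d^\perp\brack 1}={d\brack 1}$, and expressing the ratio ${n\brack d+1}/{n\brack d-1}$ via $q$-integers, the displayed inequality simplifies to $q^{n-d+1}\leq q^{d+1}$, equivalently $n\leq 2d$. Hence $t/m = n - d \leq n/2$, which gives $t\leq nm/2$. For the $\Fqm$-linear statement I would apply this result to the associated matrix code $\lambda_{\mathcal{B}}(C)$, which has $\F_q$-dimension $km$ (still divisible by $m$), is dually AMRD, and shares the rank distribution of $C$; the conclusion $km\leq nm/2$ then reduces to $k\leq n/2$.

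The conceptual heart of the argument is the sandwich $A_d^{\max} = A_d = A^\perp_{d^\perp} \leq A^{\perp,\max}_{d^\perp}$, which via the symmetry $d+d^\perp = n$ must force $d\geq d^\perp$. The main obstacle I expect is simply the Gaussian-binomial bookkeeping at the end: one has to be careful that cross-multiplying preserves the direction of the inequality and that the ratio ${n\brack d+1}/{n\brack d-1}$ is expanded correctly so that the cancellation against ${n-d\brack 1}/{d\brack 1}$ leaves the clean factor $(q^{n-d+1}-1)/(q^{d+1}-1)$.
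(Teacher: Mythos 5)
Your proof is correct and follows essentially the same route as the paper: it chains $A_d=\frac{{n\brack d+1}}{{n-d\brack 1}}(q^m-1)$ (forced by $A_{d+1}=0$ via Proposition~\ref{prop 0}), the identity $A_d=A^\perp_{d^\perp}$ from Theorem~\ref{teo-coef}~(2), and the upper bound on $A^\perp_{d^\perp}$ from Proposition~\ref{prop 0}~(3) applied to $\mC^\perp$, then reduces the resulting Gaussian-binomial inequality to $n\leq 2d$. In fact your final inequality $\frac{{n\brack d+1}}{{n-d\brack 1}}\leq\frac{{n\brack d^\perp+1}}{{n-d^\perp\brack 1}}$ is written in the correct direction, whereas the paper's displayed version has the two sides transposed (a typo), so your version is the more careful one.
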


\begin{proof} Let $\beta:=t/m \in \Z$. By Proposition \ref{prop 0} (2) we have $A_d\leq \frac{{n\brack d+1}}{{n-d\brack 1}}(q^{m}-1)$, which is equivalent to $A_d \leq \frac{{n\brack \beta -1}}{{\beta \brack 1}}(q^m-1)$. Similarly $A_{d^\perp} \leq \frac{{n\brack \beta +1}}{{n-\beta \brack 1}}(q^m-1)$. Then by Proposition \ref{prop 0} (3) and Theorem \ref{teo-coef} (2) we have $\frac{{ n \brack \beta +1}}{ { n-\beta \brack 1}} \leq \frac{{ n \brack \beta -1}}{ { \beta \brack 1}} $. Therefore $n-\beta \geq \beta$.

\end{proof}
\section{Generalized weights of dually AMRD codes}
In $\cite{Ducoat2}$ the authors define an $i$-MRD $\Fqm$-linear code $C \leq \Fqmn$ as a code $C$ meeting the generalized Singleton bound i.e. $\mathcal{M}(C)=n-k+i$. Additionally we say that an $i$-MRD $\Fqm$-linear code $C \leq \Fqmn$ is of \textit{degree} $\deg(C)=i-1$ if $i$ is the minimum integer with this property. For $\F_q$-linear codes we define:
\begin{definition}\label{r-MRD} {\rm Let $\mC \leq (\F_q)_{n,m}$ be an $\F_q$-linear code of dimension $t$. If $a_r(\mC)=n-\lfloor \frac{t-r}{m}\rfloor$ for an integer $r=1+(i-1)m \in \{1,1+m, 1+2m, \ldots, 1+(\lceil t/m\rceil-1)m\}$, we say that $\mC$ is an \textit{$i$-MRD $\F_q$-linear code with $r=1+(i-1)m$}. If $i$ is the minimum integer with this property, $\mC$ is called an $i$-MRD $\F_q$-linear code of \textit{degree} $\deg(\mC)=i-1$.}
\end{definition}
One easily verifies from the definition that MRD and QMRD $\F_q$-linear codes are $1$-MRD codes of degree 0.
In the case of $\Fqm$-linear codes, if $C \leq \Fqmn$ is an $i$-MRD $\Fqm$-linear code of degree $\deg(C)=i-1$ and the dimension of its associated code $\lambda_{\mathcal{B}}(C)$ is $t$, then by Theorem \ref{theorem 30} and Theorem \ref{corollary 31}
we have  $$n- \left\lfloor \frac{t-(1+(i-1)m)}{m}\right\rfloor= n-k+i=\mathcal{M}_i(C)=a_{im}(\lambda_{\mathcal{B}}(C))=a_{1+(i-1)m}(\lambda_{\mathcal{B}}(C)), $$
i.e. $\lambda_{\mathcal{B}}(C)$ is an $i$-MRD $\F_q$-linear code of degree $i-1$. The reciprocal is also true and we have that $C$ is an $i$-MRD $\Fqm$-linear code of degree $i-1$ if and only if $\lambda_{\mathcal{B}}(C)$ is an $i$-MRD $\F_q$-linear code of degree $i-1$. Therefore Definition \ref{r-MRD} is an appropriate generalization of the concept of $i$-AMR for $\Fqm$-linear codes.

We know that if an $\Fqm$-linear code $C$ is $i$-MRD, then $C$ is an $(i+1)$-MRD $\Fqm$-linear code. For $\F_q$-linear codes we have the following result.
\begin{lemma}\label{domino} If $\mC \leq (\F_q)_{n,m}$ is a $t$-dimensional $i$-MRD $\Fq$-linear code with $r=1+(i-1)m$, then $\mC$ is an $(i+1)$-MRD code for $1 \leq i \leq \lceil t/m\rceil-1$.
\end{lemma}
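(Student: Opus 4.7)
The plan is to chain together two properties of the generalized weights from Theorem \ref{theorem 30}: the strict monotonicity in steps of $m$ (part 4), which gives a lower bound on $a_{r+m}(\mC)$, and the Singleton-type upper bound (part 5), and to observe that for the specific choice $r = 1+(i-1)m$ these two bounds differ by exactly one.

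Set $r = 1+(i-1)m$ and $r' = r+m = 1+im$. First I would verify that the hypothesis $1 \le i \le \lceil t/m\rceil - 1$ puts both $r$ and $r'$ in the relevant ranges: one checks directly (splitting into the cases $m \mid t$ and $m \nmid t$ and writing $t = qm + s$ with $0 \le s < m$) that $r \le t-m$ so that Theorem \ref{theorem 30}(4) applies, and that $r' \le t$ so that $a_{r'}(\mC)$ is defined.

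Next I would perform the key numerical observation. Since $r' = r + m$,
\[
\left\lfloor \frac{t-r}{m}\right\rfloor - \left\lfloor \frac{t-r'}{m}\right\rfloor = \left\lfloor \frac{t-1}{m}\right\rfloor - (i-1) - \left(\left\lfloor \frac{t-1}{m}\right\rfloor - i\right) = 1.
\]
Using the hypothesis $a_r(\mC) = n - \lfloor (t-r)/m\rfloor$, this translates to
\[
n - \left\lfloor \frac{t-r'}{m}\right\rfloor = a_r(\mC) + 1.
\]

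Finally I would apply Theorem \ref{theorem 30}(4) to get the strict inequality $a_r(\mC) < a_{r'}(\mC)$, hence $a_{r'}(\mC) \ge a_r(\mC) + 1 = n - \lfloor (t-r')/m\rfloor$, and combine it with the upper bound $a_{r'}(\mC) \le n - \lfloor (t-r')/m\rfloor$ from Theorem \ref{theorem 30}(5) to conclude $a_{r'}(\mC) = n - \lfloor (t-r')/m\rfloor$, which by Definition \ref{r-MRD} is exactly the assertion that $\mC$ is $(i+1)$-MRD. There is no real obstacle here; the only subtle point is checking that the index range is compatible with both parts of Theorem \ref{theorem 30}, and that is a bookkeeping exercise separating the cases $m\mid t$ and $m\nmid t$.
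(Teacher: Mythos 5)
Your proposal is correct and is essentially the paper's own argument: both combine the strict inequality $a_r < a_{r+m}$ from Theorem \ref{theorem 30}(4) with the upper bound $a_{r+m}\leq n-\lfloor\frac{t-(r+m)}{m}\rfloor$ from Theorem \ref{theorem 30}(5), using the fact that this upper bound equals $n-\lfloor\frac{t-r}{m}\rfloor+1$ to force equality. The paper dresses the step up as an induction over $i$ and is more casual about the index ranges, but the substance is identical to yours.
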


\begin{proof}
We prove the statement by induction over $i$. One easily verifies it for $i=1$. Assume $a_r=n-\left\lfloor \frac{t-(1+(i-1)m)}{m}\right\rfloor$ for $r=1+(i-1)m.$. Then we have by Theorem \ref{theorem 30}
 $$ a_r=n-\left\lfloor\frac{t-(1+(i-1)m)}{m}\right\rfloor < a_{r+m}\leq n- \left\lfloor\frac{t-(1+im)}{m}\right\rfloor.$$ Then $$ n-\left\lfloor\frac{t-(1+(i-1)m)}{m}\right\rfloor < a_{r+m}\leq n- \left\lfloor \frac{t-(1+(i-1)m)}{m}\right\rfloor+1.$$ Therefore $a_{1+im}=a_{r+m}= n- \left\lfloor \frac{t-(1+(i-1)m)}{m}\right\rfloor+1= n- \left\lfloor\frac{t-(1+im)}{m}\right\rfloor.$
\end{proof}

\begin{lemma}\label{lemma-mk=n} Let $\mC \leq (\F_q)_{n,m}$ be an $\Fq$-linear code of dimension $t$. The following facts are equivalent:
\begin{enumerate}
\item $a_{1+(\lceil t/m\rceil-1)m}(\mC)=n$.
\item $a_{t+1-\lfloor t/m\rfloor m}(\mC^\perp) \neq 1$.
\item There exists $i \in \{1,\ldots, \lceil t/m\rceil\}$ such that $\mC$ is an $i$-MRD code of degree $0 \leq \deg(\mC) \leq i-1 \leq \lceil t/m\rceil-1$.
\end{enumerate}

\end{lemma}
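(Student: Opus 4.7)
The proof splits into the two equivalences $(1)\Leftrightarrow(3)$ and $(1)\Leftrightarrow(2)$. First, set $s=\lceil t/m\rceil$ and write $t=\beta m+\alpha$ with $0\le\alpha<m$. A one-line case analysis shows that the Singleton-type bound $n-\lfloor(t-1-(s-1)m)/m\rfloor$ equals $n$ whether $\alpha=0$ (giving $\lfloor(m-1)/m\rfloor=0$) or $\alpha>0$ (giving $\lfloor(\alpha-1)/m\rfloor=0$). Hence condition $(1)$ is literally the statement that $\mC$ is $s$-MRD in the sense of Definition \ref{r-MRD}. Now $(1)\Leftrightarrow(3)$ is immediate from Lemma \ref{domino}: the ``domino'' implication $i$-MRD $\Rightarrow (i+1)$-MRD, iterated from any $i\in\{1,\ldots,s\}$ up to $s$, yields $s$-MRDness, hence $(1)$; conversely, $(1)$ gives $(3)$ by taking $i=s$.

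For $(1)\Leftrightarrow(2)$, the natural tool is Theorem \ref{theorem 32}, which gives the complementary identity
\[
\{n+1-a_{1+t-jm}(\mC):1\le j\le\lfloor t/m\rfloor\}=[n]\setminus\{a_{1+km}(\mC^\perp):0\le k\le n-\lceil(t+1)/m\rceil\}.
\]
The plan is to chase the integer $1$ through this identity. By Theorem \ref{theorem 30}(4) applied to $\mC^\perp$, the subsequence $a_1(\mC^\perp)<a_{1+m}(\mC^\perp)<\cdots$ is strictly increasing with $a_1(\mC^\perp)\ge 1$, so $1$ is absent from the subtracted set on the right iff $a_1(\mC^\perp)\neq 1$. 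Dually, $1$ lies in the set on the left iff some $a_r(\mC)$ with $r\in\{1+t-jm\}$ attains the value $n$; by the same strict monotonicity applied on the $\mC$-side, this is equivalent to $a_{1+t-m}(\mC)=n$ (the largest $a_r$ in the LHS range). Theorem \ref{theorem 32} thus yields the biconditional $a_{1+t-m}(\mC)=n \Leftrightarrow a_1(\mC^\perp)\neq 1$.

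The final step is to reconcile the subscripts produced by Theorem \ref{theorem 32} with the indices $1+(s-1)m$ and $t+1-\lfloor t/m\rfloor m$ appearing in $(1)$ and $(2)$. When $m\mid t$ the identifications $1+t-m=1+(s-1)m$ and $t+1-\lfloor t/m\rfloor m=1$ are immediate, so $(1)\Leftrightarrow(2)$ drops out of the previous paragraph. When $m\nmid t$ one has $1+(s-1)m=1+\beta m$ while $1+t-m=1+\alpha+(\beta-1)m$, and $t+1-\lfloor t/m\rfloor m=1+\alpha$ while the dual subscript is $1$; to bridge these differing indices within the same block of size $m$ on each side one invokes Theorem \ref{theorem 30}(4)--(5), which pins each generalized weight in the block to either the Singleton value or one below it, preserving the ``attains $n$'' and ``$\neq 1$'' conditions. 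This subscript bookkeeping in the $m\nmid t$ case is the one non-formal step of the proof and is where the main care is required; once it is settled the equivalence $(1)\Leftrightarrow(2)$, and hence the whole lemma, follows.
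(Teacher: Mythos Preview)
Your argument for $(1)\Leftrightarrow(3)$ is correct and in the same spirit as the paper's. The gap is in $(1)\Leftrightarrow(2)$ when $m\nmid t$. Applying Theorem~\ref{theorem 32} as written indeed yields the clean equivalence $a_{1+t-m}(\mC)=n\Leftrightarrow a_1(\mC^\perp)\neq1$, but the subscripts $1+t-m$ and $1$ are \emph{not} the ones appearing in (1) and (2), and your proposed ``bridging'' does not work. Theorem~\ref{theorem 30}(4)--(5) do not force $a_r$ to take one of only two values inside a block of length $m$; they give monotonicity and an upper bound, nothing more. A concrete obstruction: take $n=2$, $m=3$, $t=4$ (so $\beta=1$, $\alpha=1$) and $\mC=\langle E_{11},E_{12},E_{21},E_{22}\rangle\le(\F_q)_{2,3}$. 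Then $a_{1+(\lceil t/m\rceil-1)m}(\mC)=a_4(\mC)=2=n$, yet $a_{1+t-m}(\mC)=a_2(\mC)=1\neq n$; on the dual side $\mC^\perp=\langle E_{13},E_{23}\rangle$ has $a_1(\mC^\perp)=1$ while $a_{1+\alpha}(\mC^\perp)=a_2(\mC^\perp)=2$. So neither of the two index identifications you need is valid in general, and the ``non-formal step'' you flag cannot be completed along the lines you suggest.

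The paper sidesteps this entirely by applying Theorem~\ref{theorem 32} with $\mC$ and $\mC^\perp$ \emph{interchanged} (so $t$ is replaced by $nm-t$). After the swap, the subtracted set on the right-hand side becomes exactly $\{a_1(\mC),a_{1+m}(\mC),\ldots,a_{1+(\lceil t/m\rceil-1)m}(\mC)\}$, whose maximal element is $a_{1+(\lceil t/m\rceil-1)m}(\mC)$ on the nose; condition~(1) thus appears without any index adjustment on the $\mC$ side, and condition~(2) drops out from the left-hand side. You should redo the $(1)\Leftrightarrow(2)$ argument using this interchange rather than trying to patch indices afterwards.
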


\begin{proof} Interchanging $\mC$ with $\mC^\perp$ in Theorem \ref{theorem 32} we have
 $$\{n+1-a^{\perp}_{1+t+(n-\lceil \frac{2t+1}{m}\rceil)m}, \ldots, n+1-a^{\perp}_{1+t-\lfloor t/m\rfloor m}\}=[n] \backslash\{ a_1, a_{1+m}, \ldots, a_{1+(\lceil t/m\rceil-1)m}\}.$$
 Since $$n+1-a^{\perp}_{1+t+(n-\lceil \frac{2t+1}{m}\rceil)m}< \ldots< n+1-a^{\perp}_{1+t-\lfloor t/m\rfloor m},$$ then
 $$\max\{n+1-a^{\perp}_{1+t+(n-\lceil \frac{2t+1}{m}\rceil)m}, \ldots, n+1-a^{\perp}_{1+t-\lfloor t/m\rfloor m}\}=n+1-a^{\perp}_{1+t-\lfloor t/m\rfloor m}.$$
 Similarly $\max\{ a_1, a_{1+m}, \ldots, a_{1+(\lceil t/m\rceil-1)m}\}=a_{1+(\lceil t/m\rceil-1)m}$.\\
  Therefore $a_{1+(\lceil t/m\rceil-1)m}=n \Leftrightarrow n+1-a^{\perp}_{1+t-\lfloor t/m\rfloor m} \neq n \Leftrightarrow a^{\perp}_{1+t-\lfloor t/m\rfloor m} \neq 1$.

On the other hand, if $a_{1+(\lceil t/m\rceil-1)m}=n$, then $a_{1+(\lceil t/m\rceil-1)m}= n - \left \lfloor \frac{t-(1+(\lceil t/m\rceil-1)m)}{m} \right \rfloor$, since $$ n - \left \lfloor \frac{t-(1+(\lceil t/m\rceil-1)m)}{m} \right \rfloor =n - \left\lfloor \frac{t-1}{m}\right\rfloor+ \lceil t/m\rceil-1 = n -\lceil t/m\rceil+1+ \lceil t/m\rceil-1=n.$$
Thus $\mC$ is $\lceil t/m\rceil$-MRD. Reciprocally, if $\mC$ is $i$-MRD, for $i \in \{1, \ldots, \lceil t/m\rceil-1\}$, then $\mC$ is $(i+j)$-MRD
for $j \geq 1$. In particular $\mC$ is $\lceil t/m\rceil$-MRD code and $a_{1+(\lceil t/m\rceil-1)m}=n.$
\end{proof}

\begin{corollary}\label{corollary-mk=n} Let $C\leq \Fqmn$ be an $\Fqm$-linear code. The following facts are equivalent:
\begin{enumerate}
\item $\m_k(C)=n$.
\item $\m_1(C^\perp) \neq 1$.
\item There exists $i\in\{1,\ldots, k\}$, such that $C$ is an $i$-MRD code of degree $0\leq \deg(C) \leq i-1 \leq k-1$.
\end{enumerate}

\end{corollary}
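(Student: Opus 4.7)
The plan is to reduce the corollary to Lemma~\ref{lemma-mk=n} applied to the associated $\F_q$-linear code. Fix a basis $\mathcal{B}$ of $\F_{q^m}$ over $\F_q$ and write $\mC := \lambda_{\mathcal{B}}(C)$, so that $\dim_{\F_q}(\mC) = km =: t$. Since $m \mid t$ we have $\lceil t/m\rceil = \lfloor t/m\rfloor = k$, and Lemma~\ref{lemma-mk=n} applied to $\mC$ yields the equivalence of the three conditions (i$'$) $a_{1+(k-1)m}(\mC)=n$, (ii$'$) $a_1(\mC^\perp) \neq 1$, and (iii$'$) $\mC$ is an $i$-MRD $\F_q$-linear code for some $i \in \{1,\ldots,k\}$ with $\deg(\mC) \leq i-1 \leq k-1$.

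My task is then to match (i$'$), (ii$'$), (iii$'$) with (1), (2), (3) of the corollary. For (i$'$)$\Leftrightarrow$(1) I would invoke Theorem~\ref{corollary 31} with $r=k$ and $\epsilon=m-1$: it gives $\m_k(C) = a_{km-(m-1)}(\mC) = a_{1+(k-1)m}(\mC)$, so the two conditions coincide. For (iii$'$)$\Leftrightarrow$(3) I would appeal to the observation recorded right after Definition~\ref{r-MRD}, namely that $C$ is an $i$-MRD $\Fqm$-linear code of degree $i-1$ precisely when $\mC$ is an $i$-MRD $\F_q$-linear code of the same degree.

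The step requiring most care is (ii$'$)$\Leftrightarrow$(2), because $\lambda_{\mathcal{B}}(C)^\perp$ is in general not equal to $\lambda_{\mathcal{B}}(C^\perp)$, so the translation on the dual side is not automatic. The preliminaries, however, record the result of \cite{Alberto} that these two $\F_q$-linear codes share the same rank distribution, and hence the same minimum distance. Combining this with Theorem~\ref{theorem 30}(1) and Theorem~\ref{Ducoat}(1) one obtains $a_1(\mC^\perp) = d(\mC^\perp) = d(\lambda_{\mathcal{B}}(C^\perp)) = d(C^\perp) = \m_1(C^\perp)$, so (ii$'$) becomes exactly (2). With these three identifications in place, the equivalences of Lemma~\ref{lemma-mk=n} transport directly to the three equivalences of the corollary.
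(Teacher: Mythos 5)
Your proposal is correct and follows essentially the same route as the paper: the paper's (very terse) proof likewise reduces the corollary to Lemma~\ref{lemma-mk=n} by using Theorem~\ref{corollary 31} to identify $\m_k(C)$ with $a_{1+(k-1)m}(\lambda_{\mathcal{B}}(C))$ and leaves the remaining identifications implicit. Your extra care on the dual side --- using the fact from \cite{Alberto} that $\lambda_{\mathcal{B}}(C)^\perp$ and $C^\perp$ share the same rank distribution to get $a_1(\lambda_{\mathcal{B}}(C)^\perp)=\m_1(C^\perp)$ --- is exactly the detail the paper omits, and it is handled correctly.
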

\begin{proof} In this case $t:=\dim_{\F_q}(\lambda_{\mathcal{B}}(C))=m\cdot k$. Then by Theorem \ref{corollary 31} we have  $$\m_k(C)=a_{mk}(\lambda_{\mathcal{B}}(C))=a_t(\lambda_{\mathcal{B}}(C))=a_{1+(t/m-1)m}
(\lambda_{\mathcal{B}}(C)).$$

\end{proof}
\begin{remark} {\rm \begin{enumerate}  \item Note that in general if $\mC$ is an $i$-MRD $\Fq$-linear code and $m \mid t$, by Lemma \ref{lemma-mk=n} this is equivalent to $a_1(\mC^\perp)\neq 1$, but not necessarily $a_{t/m}(\mC)=n.$ \item When we work with the Hamming metric we have $d(C^\perp)=1$ if and only if all vectors of $C$ have a zero at a certain position, which is equivalent to $d_k(C)=0.$ In rank metric codes, if $C\leq \Fqmn$ is an $\Fqm$-linear code, then $\m_1(C^\perp)=1$ if and only if there exists an $i$-th position, such that for every vector $v=(v_1, \ldots, v_n)\in C$ we have $v_i=\sum_{i \neq j}^n \beta_j v_j$, where $\beta_j\in \F_q.$
\end{enumerate}
}
\end{remark}


\begin{theorem}\label{Mariangel} Let $\mC\leq (\F_q)_{n,m} $ be an $\F_q$-linear code and  $a_{1+(\lceil t /m\rceil)-1)m}(\mC)=n$. Then $\mC$ is $i$-MRD with $r=1+(i-1)m$ if and only if $ 2+ \lfloor \frac{t-r}{m}\rfloor \leq a^{\perp}_{t+1-\lfloor t/m\rfloor m} $. In particular, $\mC$ is $i$-MRD of degree $i-1$ if and only if $a_{t+1-\lfloor t/m\rfloor m}(\mC^{\perp})=2+ \lfloor \frac{t-r}{m}\rfloor$ or equivalently $a_{t+1-\lfloor t/m\rfloor m}(\mC^{\perp})=\lceil t/m\rceil-i+2$.
\end{theorem}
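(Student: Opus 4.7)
The plan is to derive the equivalence from the Wei-type duality of Theorem~\ref{theorem 32} applied with the roles of $\mC$ and $\mC^{\perp}$ swapped, combined with the strict monotonicity of generalized weights from Theorem~\ref{theorem 30}.

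First, I would apply Theorem~\ref{theorem 32} to $\mC^{\perp}$ (which has dimension $nm-t$) with the roles of $\mC$ and $\mC^{\perp}$ interchanged. Simplifying the cardinalities via $\lfloor(nm-t)/m\rfloor=n-\lceil t/m\rceil$ and $n-\lceil(nm-t+1)/m\rceil+1=\lceil t/m\rceil$, this yields a disjoint decomposition
\[
[n]=\bigl\{a_{1+(i-1)m}(\mC):1\le i\le \lceil t/m\rceil\bigr\}\ \sqcup\ \bigl\{n+1-a^{\perp}_{s}:s\in J\bigr\},
\]
where $J$ is an arithmetic progression of $n-\lceil t/m\rceil$ indices of common difference $m$ with smallest element $s_{0}=t+1-\lfloor t/m\rfloor m$. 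Both families are strictly monotone by Theorem~\ref{theorem 30}(4); in particular the complementary family $n+1-a^{\perp}_{s}$ is strictly decreasing along $J$ with maximum value $n+1-a^{\perp}_{s_{0}}$.

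Next, setting $K=\lceil t/m\rceil$ and $r=1+(i-1)m$, the Singleton-type bound in Theorem~\ref{theorem 30}(5) gives $a_{r}(\mC)\le n-\lfloor(t-r)/m\rfloor=n-K+i$, while the hypothesis $a_{1+(K-1)m}(\mC)=n$ places the top element of $[n]$ in the first family. The heart of the argument is a short combinatorial observation inside $[n]$: since $a_{1}(\mC)<a_{1+m}(\mC)<\cdots<a_{1+(K-1)m}(\mC)=n$ is a strictly increasing chain ending at $n$, the equality $a_{r}(\mC)=n-K+i$ is equivalent to the top $K-i+1$ of these values filling the consecutive block $\{n-K+i,\ldots,n\}$; by the decomposition this is equivalent to the complementary family being contained in $[n-K+i-1]$, i.e.\ to $n+1-a^{\perp}_{s_{0}}\le n-K+i-1$, which rearranges to $a^{\perp}_{s_{0}}\ge K-i+2=2+\lfloor(t-r)/m\rfloor$. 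This establishes the main equivalence.

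For the ``in particular'' clause, degree exactly $i-1$ means $\mC$ is $i$-MRD but not $(i-1)$-MRD; applying the main equivalence at both $i$ and $i-1$ yields $K-i+2\le a^{\perp}_{s_{0}}<K-i+3$, forcing $a^{\perp}_{s_{0}}=K-i+2=\lceil t/m\rceil-i+2$. The trickiest ingredient will be the bookkeeping in the duality step, in particular identifying the progression $J$ and verifying its minimum element is exactly $s_{0}=t+1-\lfloor t/m\rfloor m$; once that is in place, the rest reduces to a clean monotone-partition argument inside $[n]$.
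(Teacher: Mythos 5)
Your proposal is correct and follows essentially the same route as the paper's own proof: both apply Theorem~\ref{theorem 32} with the roles of $\mC$ and $\mC^{\perp}$ interchanged, identify $n+1-a^{\perp}_{t+1-\lfloor t/m\rfloor m}$ as the maximum of the complementary set in $[n]$, and reduce the $i$-MRD condition to the chain $a_r<a_{r+m}<\dots<a_{1+(\lceil t/m\rceil-1)m}=n$ having no gaps (your counting argument for this step is the content of the paper's Lemma~\ref{domino}). The only cosmetic difference is in the ``in particular'' clause, where you apply the main equivalence at $i$ and $i-1$ while the paper computes $a_{r-m}<a_r-1$ directly; these amount to the same thing.
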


\begin{proof}
Let $a_{1+(\lceil t/m\rceil-1)m}(\mC)=n$. We know that
 $$\{n+1-a^{\perp}_{1+t+(n-\lceil \frac{2t+1}{m}\rceil)m}, \ldots, n+1-a^{\perp}_{1+t-\lfloor t/m\rfloor m}\}=[n] \backslash
\{ a_1, a_{1+m}, \ldots, a_{1+(\lceil t/m\rceil-1)m}\}.$$ Then $n+1-a_{1+t-\lfloor t/m\rfloor m}(\mC^\perp)=\max \left( [n]
\backslash \left\{ a_1, a_{1+m}, \ldots, a_{1+(\lceil t/m\rceil-1)m}\right\}\right)$. If the
sequence \begin{equation}\label{eq 1} a_r(\mC) < a_{r+m}(\mC) < a_{r+2m}(\mC)<\ldots < a_{1+(\lceil t/m\rceil-1)m}(\mC)=n \end{equation} with $r=1+(i-1)m$ has no gaps, then $a_r(\mC)=n-\left((\lceil t/m\rceil-1)-(i-1)\right)$ and $a_r(\mC)= n-\lfloor \frac{t-r}{m}\rfloor$. Therefore by Lemma \ref{domino} the sequence (\ref{eq 1}) has no gaps if and only if $a_r(\mC)= n-\lfloor \frac{t-r}{m}\rfloor.$
Hence $n+1-a_{1+t-\lfloor t/m\rfloor m}(\mC^\perp) \leq  n-\lfloor \frac{t-r}{m}\rfloor-1$ if and only if $a_r=n-\lfloor \frac{t-r}{m}\rfloor$.

In particular, $$\begin{array}{rl}
2+ \left\lfloor \frac{t-r}{m}\right\rfloor  = a_{1+t-\lfloor t/m\rfloor m}(\mC^\perp)\Leftrightarrow & n+1-a_{1+t-\lfloor t/m\rfloor}(\mC^\perp)=n-\lfloor \frac{t-r}{m}\rfloor-1\\
\Leftrightarrow&  a_{r-m}(\mC)<a_r(\mC)-1=n-\left\lfloor \frac{t-r}{m}\right\rfloor-1=n- \left \lfloor\frac{t-(r-m)}{m} \right \rfloor,
\end{array}$$
which means that $\mC$ is $i$-MRD of degree $i-1$.
\end{proof}

\begin{corollary}\label{coro-Mariangel} Let $C\leq \Fqmn$ be an $\Fqm$-linear code and $\m_k=n$. Then $C$ is $i$-MRD if and only if $ k-i+2 \leq \m_1(C^\perp) $. In particular, $C$ is $i$-MRD of degree $i-1$ if and only if $\m_1(C^\perp) =k-i+2$.
\end{corollary}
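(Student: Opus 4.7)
The plan is to deduce Corollary \ref{coro-Mariangel} from Theorem \ref{Mariangel} by passing to the associated $\F_q$-linear code $\mC:=\lambda_{\mathcal{B}}(C)$ and then translating every quantity appearing in the theorem through Theorem \ref{corollary 31}. Since $t:=\dim_{\F_q}\mC = mk$, we have $\lceil t/m\rceil = \lfloor t/m\rfloor = k$, so the hypothesis of Theorem \ref{Mariangel} reads $a_{1+(\lceil t/m\rceil-1)m}(\mC)=a_{1+(k-1)m}(\mC)=n$; by Theorem \ref{corollary 31} this is exactly $\m_k(C)=n$, which is our hypothesis. Moreover, as already observed in the paragraph following Definition \ref{r-MRD}, $C$ is $i$-MRD (respectively of degree $i-1$) as an $\Fqm$-linear code if and only if $\mC$ is $i$-MRD with $r=1+(i-1)m$ (respectively of degree $i-1$) as an $\F_q$-linear code, so the conclusions line up as well.

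Next I identify the two numerical quantities appearing on the right-hand side of Theorem \ref{Mariangel}. Because $t=km$ is a multiple of $m$, we get $t+1-\lfloor t/m\rfloor m = 1$, so the relevant dual weight in the theorem is $a_1(\mC^\perp)=d(\mC^\perp)$. Although in general $\mC^\perp = \lambda_{\mathcal{B}}(C)^\perp$ differs from $\lambda_{\mathcal{B}}(C^\perp)$, the result cited from \cite{Alberto} in the preliminaries states that these two codes have the same rank distribution, and hence the same minimum distance; therefore $a_1(\mC^\perp) = d(C^\perp) = \m_1(C^\perp)$. A direct calculation also gives, for $r=1+(i-1)m$ and $1\le i\le k$,
\[
\left\lfloor\frac{t-r}{m}\right\rfloor = \left\lfloor\frac{(k-i+1)m-1}{m}\right\rfloor = k-i,
\]
so $2+\lfloor(t-r)/m\rfloor = k-i+2$.

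Substituting these identifications into Theorem \ref{Mariangel} immediately yields the equivalences of Corollary \ref{coro-Mariangel}: $C$ is $i$-MRD if and only if $k-i+2 \le \m_1(C^\perp)$, and the specialization to equality characterizes degree $i-1$. The only delicate point in the argument is making sure that $a_1(\lambda_{\mathcal{B}}(C)^\perp) = \m_1(C^\perp)$ even though the two duals need not coincide as codes; this is the step that uses the agreement of rank distributions from \cite{Alberto}, and once that is in hand the remainder is a purely bookkeeping translation.
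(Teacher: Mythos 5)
Your proposal is correct and follows essentially the same route as the paper: pass to the associated code $\lambda_{\mathcal{B}}(C)$, apply Theorem \ref{Mariangel}, and translate the indices using $t=mk$ so that $a^{\perp}_{t+1-\lfloor t/m\rfloor m}$ becomes $\m_1(C^\perp)$ and $2+\lfloor (t-r)/m\rfloor$ becomes $k-i+2$. You are in fact slightly more careful than the paper in justifying $a_1\left(\lambda_{\mathcal{B}}(C)^\perp\right)=\m_1(C^\perp)$ via the agreement of rank distributions from \cite{Alberto}, a step the paper's proof uses implicitly.
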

\begin{proof} $C$ is an $i$-MRD $\Fqm$-linear code if and only if $\lambda_{\mathcal{B}}(C)$ is an $i$-MRD $\Fq$-linear code. By Theorem \ref{Mariangel},
 this is equivalent to $\mathcal{M}_1(C^{\perp})= a^{\perp}_{t+1-\lfloor t/m\rfloor m}(\lambda_{\mathcal{B}}(C))\geq 2+ \lfloor \frac{t-r}{m}\rfloor= 2+ \left\lfloor \frac{t-(1+(i-1)m)}{m}\right\rfloor =k-i+2$.
Furthermore, $C$ is an $i$-MRD $\Fqm$-linear code of degree $i-1$ if and only if $\lambda_{\mathcal{B}}(C)$ is also of degree $i-1$, which by Theorem \ref{Mariangel} is
equivalent to $\mathcal{M}_1(C^{\perp})=a^{\perp}_1(\lambda_{\mathcal{B}}(C))=2+t/m-i=k-i+2$.

\end{proof}

\begin{theorem}\label{consequence}
 Let $\mC \leq (\F_q)_{n,m} $ be an $\Fq$-linear code of dimension $t$ and $m \mid t$. The following hold:
\begin{enumerate}
\item Let $a_{1+(\lceil t/m\rceil-1)m}(\mC)=n$. Then
  $\mC$ is an $i$-MRD code of degree $\deg(\mC)=i-1$ if and only if $\mC^\perp$ is an A$^{i-1}$MRD code i.e. $\rdef(\mC^\perp)=i-1$.
\item $a_{1+(\lceil t/m\rceil-1)m}(\mC)<n$ if and only if
$\mC$ is an A$^{t/m}$AMR code i.e. $\rdef(\mC^\perp)=t/m$.
\end{enumerate}
\end{theorem}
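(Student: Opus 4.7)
The plan is to reduce both parts to results already established in Theorem \ref{Mariangel} and Lemma \ref{lemma-mk=n}, exploiting the arithmetic simplifications that occur under the hypothesis $m\mid t$. Writing $k:=t/m$, we have $\lceil t/m\rceil=\lfloor t/m\rfloor=k$, so the distinguished index $t+1-\lfloor t/m\rfloor m$ collapses to $1$; consequently $a^{\perp}_{t+1-\lfloor t/m\rfloor m}(\mC^\perp)=a_1(\mC^\perp)=d^\perp$. Moreover $\dim\mC^\perp=nm-t$ is also a multiple of $m$, and so the rank defect of the dual takes the clean form
$$\rdef(\mC^\perp)=n-\tfrac{nm-t}{m}+1-d^\perp=k+1-d^\perp.$$
I would set up these identities first, since both parts rest on them.

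For part (1), I invoke Theorem \ref{Mariangel} with $r=1+(i-1)m$. A quick check shows $\lfloor(t-r)/m\rfloor=k-i$, so the characterization there becomes: $\mC$ is $i$-MRD of degree $i-1$ if and only if $d^\perp=2+(k-i)=k-i+2$. Plugging this into the formula for $\rdef(\mC^\perp)$ yields $\rdef(\mC^\perp)=k+1-(k-i+2)=i-1$, which is precisely the statement that $\mC^\perp$ is A$^{i-1}$MRD. The argument is reversible: starting from $\rdef(\mC^\perp)=i-1$, the same chain of equalities forces $d^\perp=k-i+2$, and Theorem \ref{Mariangel} then certifies the degree condition.

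For part (2), the hypothesis $a_{1+(\lceil t/m\rceil-1)m}(\mC)<n$ is, by Lemma \ref{lemma-mk=n}, equivalent to $a^\perp_{t+1-\lfloor t/m\rfloor m}=1$; under $m\mid t$ this says $d^\perp=1$. The formula above then gives $\rdef(\mC^\perp)=k+1-1=k=t/m$, i.e.\ $\mC^\perp$ is A$^{t/m}$MRD. The converse direction runs the implication backward: $\rdef(\mC^\perp)=t/m$ forces $d^\perp=1$, and Lemma \ref{lemma-mk=n} translates this back to the failure of the generalized-weight identity.

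The only real obstacle is bookkeeping—keeping straight which generalized weight is being read on $\mC$ versus $\mC^\perp$, and verifying that every floor/ceiling expression indeed simplifies correctly under $m\mid t$. Once those index manipulations are set up at the start, no further computation is required beyond applying the two cited results.
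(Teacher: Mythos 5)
Your proposal is correct and follows essentially the same route as the paper: part (1) reduces to Theorem \ref{Mariangel} via the identity $\rdef(\mC^\perp)=\lfloor t/m\rfloor+1-d^\perp$ after observing that the index $t+1-\lfloor t/m\rfloor m$ collapses to $1$ when $m\mid t$, and part (2) follows from Lemma \ref{lemma-mk=n} in the same way. Your write-up merely makes the index arithmetic (e.g.\ $\lfloor(t-r)/m\rfloor=k-i$) more explicit than the paper does.
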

\begin{proof}

\begin{enumerate}
\item
By Theorem \ref{Mariangel}, $\mC$ is an $i$-MRD code of degree $i-1$ if and only if $\rdef(\mC^\perp)=\lfloor t/m\rfloor+1-d^\perp=\lfloor t/m\rfloor+1-\lceil t/m\rceil+i-2=i-1.$

\item By Lemma \ref{lemma-mk=n}, we have\\$a_{1+(\lceil t/m\rceil-1)m}(\mC)<n \Leftrightarrow a_1(\mC^\perp)=1 \Leftrightarrow \rdef(\mC^\perp)= \lfloor t/m \rfloor+1-1= t/m.$
\end{enumerate}
\end{proof}

\begin{corollary}\label{consequence-Gab}
 Let $C \leq \Fqmn$ be an $\Fqm$-linear $[n,k,d]$ code. The following hold:
\begin{enumerate}
\item If $\m_k=n$, then $C$ is an $i$-MRD code of degree $\deg(C)=i-1$ if and only if $C^\perp$ is an A$^{i-1}$MRD code i.e. $\rdef(C^\perp)=i-1$.
\item $\m_k<n$ if and only if $C$ is an A$^k$AMR code i.e. $\rdef(C^\perp)=k$.
\end{enumerate}
\end{corollary}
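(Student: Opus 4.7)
The plan is to deduce Corollary \ref{consequence-Gab} directly from Theorem \ref{consequence} by passing to the associated $\F_q$-linear code $\lambda_{\mathcal{B}}(C) \leq (\F_q)_{n,m}$. Since $\dim_{\F_q}(\lambda_{\mathcal{B}}(C)) = mk =: t$, the divisibility hypothesis $m \mid t$ of Theorem \ref{consequence} is automatically satisfied, and $\lceil t/m \rceil = k$; the two parts of the corollary will correspond term-for-term to the two parts of Theorem \ref{consequence}.

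First I would set up three translations between the $\Fqm$-linear data for $C$ and the $\F_q$-linear data for $\lambda_{\mathcal{B}}(C)$. (i) Applying Theorem \ref{corollary 31} with $r=k$ and $\epsilon=0$ gives $\m_k(C) = a_{km}(\lambda_{\mathcal{B}}(C)) = a_{1+(\lceil t/m\rceil - 1)m}(\lambda_{\mathcal{B}}(C))$, so the condition $\m_k(C) = n$ is exactly the condition $a_{1+(\lceil t/m\rceil - 1)m}(\lambda_{\mathcal{B}}(C)) = n$ appearing in Theorem \ref{consequence}. (ii) As observed explicitly in the discussion following Definition \ref{r-MRD}, $C$ is an $i$-MRD code of degree $i-1$ if and only if $\lambda_{\mathcal{B}}(C)$ is an $i$-MRD code of degree $i-1$. (iii) I need $\rdef(C^\perp) = \rdef(\lambda_{\mathcal{B}}(C)^\perp)$; both codes have $\F_q$-dimension $nm - mk = m(n-k)$, and by the result of \cite{Alberto} recalled in the preliminaries their rank distributions coincide, so $d(C^\perp) = d(\lambda_{\mathcal{B}}(C)^\perp)$, and a short computation yields $\rdef(C^\perp) = k+1-d(C^\perp) = \rdef(\lambda_{\mathcal{B}}(C)^\perp)$.

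With this dictionary in hand, part (1) is Theorem \ref{consequence}(1) applied to $\lambda_{\mathcal{B}}(C)$, and part (2) is Theorem \ref{consequence}(2) applied to $\lambda_{\mathcal{B}}(C)$ combined with $t/m = k$. There is no substantive obstacle; the only care required is in checking translation (iii), namely that the rank defect of the dual really is preserved under passage to the associated code despite the fact that $C^\perp$ and $\lambda_{\mathcal{B}}(C)^\perp$ are not equal in general. This is secured by combining the dimension count $m(n-k)$ with the rank-distribution preservation cited in the preliminaries, after which the whole argument collapses into a direct invocation of the previous theorem.
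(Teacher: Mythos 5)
Your proposal is correct and follows the same route the paper intends: the corollary is stated without proof precisely because it is the specialization of Theorem \ref{consequence} to the associated code $\lambda_{\mathcal{B}}(C)$ with $t=mk$, exactly as you carry out (and your care with point (iii), using that the rank distributions of $C^\perp$ and $\lambda_{\mathcal{B}}(C)^\perp$ agree even though the codes differ, is the one genuinely non-trivial check). The argument is complete as written.
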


\begin{theorem}\label{Miguel} Let $\mC \leq (\F_q)_{n,m} $ be an $\Fq$-linear code of dimension $t$ and $m\mid t$. The following facts hold:
\begin{enumerate}
\item Assume $a_{1+(\lceil t/m\rceil-1)m}(\mC)=n$ and $\rdef(\mC)\geq 1$. Then $\mC^\perp$ is $\mathrm{AMRD}$ if and only if
$a_{1+m}(\mC)=d+\rdef(\mC)+1$. Therefore, if $\mC$ is AMRD, then $\mC$ is dually AMRD if and only if $a_{1+m}(\mC)=d+2.$
\item If $a_{1+(\lceil t/m\rceil-1)m}(\mC)<n$, then $\mC^{\perp}$ is $\mathrm{AMRD}$ if and only if
$\lceil t/m\rceil=1$. Moreover, $\mC$ is dually AMRD with $t/m=1$ if and only if $a_{1+(\lceil t/m\rceil-1)m}(\mC)<n$ and $d=n-1$.
\end{enumerate}
\end{theorem}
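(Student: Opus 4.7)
The plan is to set $k := t/m = \lceil t/m\rceil$ throughout, so that the two hypotheses on $a_{1+(k-1)m}(\mC)$ become a single parameter condition, and to rely on the formulas $\rdef(\mC) = n - k + 1 - d$ and $\rdef(\mC^\perp) = k + 1 - d^\perp$ together with Theorem \ref{consequence} and Lemma \ref{lemma-mk=n}.

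For Part 1, I would invoke Theorem \ref{consequence}(1): under $a_{1+(k-1)m}(\mC) = n$, $\mC$ is $i$-MRD of degree $i-1$ if and only if $\rdef(\mC^\perp) = i-1$. Taking $i = 2$, $\mC^\perp$ is AMRD exactly when $\mC$ is $2$-MRD of degree $1$. The assumption $\rdef(\mC) \geq 1$ is precisely what prevents $\mC$ from being $1$-MRD (which, when $m\mid t$, coincides with being MRD), so the degree clause is automatic and it suffices to check that $\mC$ is $2$-MRD. By Definition \ref{r-MRD}, this is the equality
\[
a_{1+m}(\mC) = n - \left\lfloor \frac{t-1-m}{m}\right\rfloor = n - k + 2,
\]
and $n - k + 1 = d + \rdef(\mC)$ rewrites the right-hand side as $d + \rdef(\mC) + 1$. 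The ``in particular'' clause is the AMRD specialisation $\rdef(\mC) = 1$.

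For Part 2, Lemma \ref{lemma-mk=n} converts $a_{1+(k-1)m}(\mC) < n$ into $d^\perp = 1$, whence $\rdef(\mC^\perp) = k + 1 - d^\perp = k$; so $\mC^\perp$ is AMRD iff $k = 1$, i.e.\ $\lceil t/m\rceil = 1$. For the second biconditional, the forward direction is routine substitution: dually AMRD with $t/m = 1$ forces $d = n - 1$ and $d^\perp = 1$, the latter giving $a_{1+(k-1)m}(\mC) < n$ through the lemma. For the converse I would use the Singleton bound $d \leq n - k + 1$: the assumption $d = n-1$ forces $k \leq 2$, and $k = 2$ would force $\rdef(\mC) = 0$, hence $\mC$ MRD, hence $\mC^\perp$ MRD, hence $d^\perp = k + 1 = 3$, contradicting the $d^\perp = 1$ obtained from the hypothesis; so $k = 1 = t/m$, and both defects equal $1$.

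The only delicate point is the degree bookkeeping in Part 1: one must verify that when $m\mid t$ the notions of $1$-MRD and MRD coincide, so that $\rdef(\mC) \geq 1$ automatically upgrades ``$2$-MRD'' to ``$2$-MRD of degree $1$''. Apart from this, the argument is substitution into the formulas for $\rdef$, the Singleton bound, Lemma \ref{lemma-mk=n}, and a single invocation of Theorem \ref{consequence}(1).
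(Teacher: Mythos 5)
Your proposal is correct and follows essentially the same route as the paper: Part 1 is Theorem \ref{consequence}(1) with $i=2$, using $\rdef(\mC)\geq 1$ to rule out the $1$-MRD case so that ``$2$-MRD'' and ``$2$-MRD of degree $1$'' coincide, and then translating $a_{1+m}(\mC)=n-\lfloor(t-1-m)/m\rfloor$ into $d+\rdef(\mC)+1$; Part 2 is the $\rdef(\mC^\perp)=t/m$ identification from Theorem \ref{consequence}(2) (equivalently Lemma \ref{lemma-mk=n}), with your Singleton-bound elimination of $k=2$ merely filling in the ``easily follows'' step the paper leaves implicit.
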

\begin{proof}
\begin{enumerate}
\item Let $\mC^{\perp}$ AMRD. Since $\rdef(\mC^\perp)=1$, then By Theorem \ref{consequence} (1) we have $\deg(\mC)=1$. Therefore $a_{1+m}(\mC)=n-\left\lfloor \frac{t-(1+m)}{m}\right\rfloor=n-\lceil t/m\rceil+2=(n- \lceil t/m\rceil+1-d)+d+1=\rdef(\mC)+d+1.$ Reciprocally, let $a_{1+m}(\mC)=d+\rdef(\mC)+1$. Then $a_{1+m}(\mC)=d+n-\lceil t/m\rceil+1-d+1=n-\lceil t/m\rceil+2=n-\left\lfloor \frac{t-(1+m)}{m}\right\rfloor$. Since $\rdef(\mC) \geq 1$, then $d \neq n-\lceil t/m\rceil+1=n-\lfloor \frac{t-1}{m}\rfloor$ and $\mC$ is a $2$-AMRD with $\deg(\mC)=1$. Therefore $\rdef(\mC^\perp)=1$.
\item The result easily follows from Theorem \ref{consequence} (2).
\end{enumerate}
\end{proof}

\begin{corollary}\label{corolario2} Let $C \leq \Fqmn$ be an $\Fqm$-linear $[n,k,d]$ code.
The following facts hold:
\begin{enumerate}
\item Assume $\mM_k(C)=n$ and $\dc\geq 1$. Then $\dual$ is $\mathrm{AMRD}$ if and only if
$\m_{2}(C)=d+\defect(C)+1$. Therefore, if $C$ is AMRD, then $C$ is dually AMRD if and only if $\m_2(C)=d+2.$
\item If $\mM_k(C)<n$, then $\dual$ is $\mathrm{AMRD}$ if and only if
$k=1$. Moreover, $C$ is a $1$-dimensional dually AMRD code if and only if $\mM_k(C)<n$ and $d=n-1$.
\end{enumerate}
\end{corollary}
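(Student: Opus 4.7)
The plan is to deduce this corollary directly from Theorem \ref{Miguel} by passing to an associated $\F_q$-linear code. Fix any basis $\mathcal{B}$ of $\F_{q^m}$ over $\F_q$ and set $\mC := \lambda_{\mathcal{B}}(C)$, so $\mC \leq (\F_q)_{n,m}$ has $\F_q$-dimension $t = mk$. In particular $m \mid t$ and $\lceil t/m\rceil = k$, so the hypothesis of Theorem \ref{Miguel} is met. By definition of the rank defect for an $\Fqm$-linear code we have $\defect(\mC) = \defect(C) = n - k + 1 - d$ and $d(\mC) = d$.

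The central translation is Theorem \ref{corollary 31}, which gives $\m_r(C) = a_{rm-\epsilon}(\mC)$ for every $1\le r\le k$ and $0\le \epsilon\le m-1$; in particular, taking $\epsilon = m-1$, we obtain $\m_r(C) = a_{1+(r-1)m}(\mC)$. Specializing to $r = k$ yields $\m_k(C) = a_{1+(\lceil t/m\rceil-1)m}(\mC)$, while $r = 2$ gives $\m_2(C) = a_{1+m}(\mC)$. Thus the hypothesis $\m_k(C) = n$ (resp.\ $\m_k(C) < n$) becomes $a_{1+(\lceil t/m\rceil-1)m}(\mC) = n$ (resp.\ $< n$), and the equation $\m_2(C) = d + \defect(C) + 1$ becomes $a_{1+m}(\mC) = d + \defect(\mC) + 1$.

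For the duality side, recall from the preliminaries that although $\lambda_{\mathcal{B}}(\dual)$ and $\mC^\perp$ are generally distinct as subspaces, the result of \cite{Alberto} cited there guarantees that they share the same rank distribution; they also share the common $\F_q$-dimension $m(n-k)$. Consequently their minimum distances and rank defects coincide, so ``$\mC^\perp$ is AMRD'' is equivalent to ``$\dual$ is AMRD''. With this dictionary, part 1 of Theorem \ref{Miguel} applied to $\mC$ yields part 1 of the corollary verbatim, and part 2 follows from part 2 of Theorem \ref{Miguel} using $\lceil t/m\rceil = k$, which makes $\lceil t/m\rceil = 1$ equivalent to $k = 1$.

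The ``moreover'' clause of part 2 requires a small but delicate check: the characterization of $1$-dimensional dually AMRD $C$ as those satisfying $\m_k(C) < n$ and $d = n-1$ translates directly from the analogous clause of Theorem \ref{Miguel}, once one observes that $k = 1 \Leftrightarrow t/m = 1$. The main obstacle I anticipate is simply the bookkeeping around the dual, namely invoking the equality of rank distributions between $\lambda_{\mathcal{B}}(\dual)$ and $\mC^\perp$ at exactly the right place so that the $\F_q$-linear statement of Theorem \ref{Miguel} about $\mC^\perp$ transfers cleanly to a statement about $\dual$. Beyond this, the argument is a direct substitution and no further estimates are needed.
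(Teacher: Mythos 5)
Your proposal is correct and follows exactly the route the paper intends: the corollary is stated without proof as an immediate consequence of Theorem \ref{Miguel}, obtained by passing to $\lambda_{\mathcal{B}}(C)$ with $t=mk$, translating $\m_r(C)=a_{1+(r-1)m}(\lambda_{\mathcal{B}}(C))$ via Theorem \ref{corollary 31}, and using the equality of rank distributions of $\lambda_{\mathcal{B}}(C)^\perp$ and $\lambda_{\mathcal{B}}(C^\perp)$ -- the same dictionary the paper spells out in its proof of Corollary \ref{coro-Mariangel}. Your explicit handling of the dual-side bookkeeping is a careful filling-in of what the paper leaves implicit.
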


\begin{example} {\rm If $\m_k(C)< n$, then $C$ is dually AMRD if and only if $k=1$ and $\m_k=d=n-1$ i.e. $C=\langle v_1,\ldots, v_n\rangle$, where the maximum number of coordinates $v_1,\ldots, v_n$ that are linearly independent over $\F_q$ is $n-1$.  For example, let $q=2$, $n=m=4$ and $\F_{2^4}=\F_2[\alpha]$, where
$\alpha$ satisfies $\alpha^4+\alpha+1=0$. The 1-dimensional code $C \leq
\F_{2^4}^4$ generated by $(1, \alpha,\alpha^2,0)$ has minimum rank $d=3$ and since $(0,0,0,1) \in C^\perp$, then $d^\perp=1$. Therefore $C$ is dually AMRD. Another example is the code $\widehat{\mathcal{G}}$, where $\mathcal{G}$ is the Gabidulin code with $\gen(\mathcal{G})=M_k(v_1,\ldots, v_n)$ (see Theorem \ref{exist-dually-1}).}
\end{example}
\begin{definition} {\rm An $\F_q$-linear code $\mC \leq (\F_q)_{n,m}$ is called a \textit{$2$-AMRD} $\Fq$-linear code if and only if $\mC$ is AMRD and $2$-MRD. Similar definition is given for an $\Fqm$-linear code considering its associated code. }
\end{definition}
In coding theory under the Hamming metric, these $2$-AMRD codes are called near MDS codes (see \cite{Dodunekov, Faldum-Willems}). We can easily see that if $m\mid t$, then the concepts dually AMRD and $2$-AMRD agree, as the following theorem shows.

\begin{theorem} Let $\mC \leq (\F_q)_{n,m}$ be an
$\F_q$-linear code of dimension $t$ with minimum distance $d$ and dual minimum distance $d^\perp$.
The following facts hold:
\begin{enumerate}
\item If $\mC$ is $2$-AMRD, then $a^{\perp}_{1+t-(\lfloor t/m\rfloor-1)m}(\mC)=\lceil t / m\rceil+ 2$ and $a^{\perp}_{1+t-\lfloor t/m\rfloor m}(\mC)=\lceil t /m \rceil$.
\item If $m \mid t$ and $\mC$ is $2$-AMRD, then $\mC^\perp$ is $2$-AMRD.
\item Let $m \mid t$. Then $\mC$ is a $2$-AMRD code if and only if $\mC$ is a dually AMRD code with $t/m>1$.
\end{enumerate}

\end{theorem}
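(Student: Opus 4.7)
The plan is to combine the $2$-AMRD hypothesis on $\mC$ (which by definition is AMRD together with $2$-MRD) with Lemma \ref{domino} to pin down the weights $a_{1+(i-1)m}(\mC)$ for $i=1,\ldots,\lceil t/m\rceil$, and then push this data across to $\mC^{\perp}$ via Theorem \ref{theorem 32}. Parts 2 and 3 will then follow quickly from Part 1 together with Proposition \ref{prop 1}.

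For Part 1 I focus on the case $m\mid t$ (the only case needed by Parts 2 and 3) and write $\beta = t/m$. The $2$-AMRD hypothesis combined with Lemma \ref{domino} forces $a_1(\mC)=n-\beta$ and $a_{1+(i-1)m}(\mC)=n-\beta+i$ for $i=2,\ldots,\beta$, so the list $\{a_{1+jm}(\mC)\}_{j=0}^{\beta-1}$ equals $\{n-\beta,\,n-\beta+2,\,n-\beta+3,\ldots,n\}$, missing only the value $n-\beta+1$. Passing through $x\mapsto n+1-x$ and taking the complement as prescribed by Theorem \ref{theorem 32}, the $n-\beta$ values $\{a^{\perp}_{1+km}(\mC)\}_{k=0}^{n-\beta-1}$ must equal $\{\beta\}\cup\{\beta+2,\ldots,n\}$. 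Strict monotonicity of $a^{\perp}_{1+km}$ from Theorem \ref{theorem 30}(4) then identifies the smallest two: $a^{\perp}_1(\mC)=\beta$ and $a^{\perp}_{1+m}(\mC)=\beta+2$, which for $m\mid t$ are exactly the values $\lceil t/m\rceil$ and $\lceil t/m\rceil+2$ at the indices $1+t-\lfloor t/m\rfloor m=1$ and $1+t-(\lfloor t/m\rfloor-1)m=1+m$ appearing in the statement.

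Parts 2 and 3 then fall out. For Part 2, $a^{\perp}_1=\beta$ makes $\mC^{\perp}$ AMRD by Proposition \ref{prop 1}(5), and $a^{\perp}_{1+m}=\beta+2=n-\lfloor (t^{\perp}-1-m)/m\rfloor$ for $m\mid t^{\perp}=(n-\beta)m$ gives $2$-MRD, hence $\mC^{\perp}$ is $2$-AMRD. For Part 3, the forward direction is immediate from Part 2 together with the observation that $2$-MRD requires $\lceil t/m\rceil\geq 2$, so $t/m>1$ when $m\mid t$. For the reverse direction, assume $\mC$ is dually AMRD with $m\mid t$ and $\beta\geq 2$; then $a_1=n-\beta$ and $a^{\perp}_1=\beta$, so Theorem \ref{theorem 32} places $\beta+1$ in the $\{n+1-a_{1+jm}\}$-set and $\beta$ in the $\{a^{\perp}_{1+km}\}$-set. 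Strict monotonicity forces $a^{\perp}_{1+km}\geq \beta+2$ for $k\geq 1$, and a counting argument ($n-\beta-1$ strictly increasing values fitting in the $(n-\beta-1)$-element interval $[\beta+2,n]$) identifies them uniquely as $\beta+2,\beta+3,\ldots,n$; reading this back through Theorem \ref{theorem 32} yields $a_{1+m}(\mC)=n-\beta+2$, precisely the $2$-MRD condition.

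The main obstacle is the combinatorial bookkeeping in Part 1: one must carefully match the single ``gap'' at $n-\beta+1$ in the $a_{1+jm}(\mC)$-list to the ``extra'' value $\beta$ in the dual list through the set complement in Theorem \ref{theorem 32}, and then verify that the two smallest $a^{\perp}$-entries sit precisely at the indices $1$ and $1+m$. Once Part 1 is in hand, Parts 2 and 3 reduce to straightforward bookkeeping using Proposition \ref{prop 1}.
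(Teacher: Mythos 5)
Your argument for parts 2 and 3, and for part 1 in the case $m\mid t$, is correct and follows essentially the same route as the paper: you use Lemma \ref{domino} to show that the sequence $a_{1+m}(\mC)<a_{1+2m}(\mC)<\dots<a_{1+(\lceil t/m\rceil-1)m}(\mC)=n$ consists of consecutive integers with the single gap at $n-\beta+1$ sitting between $a_1$ and $a_{1+m}$, push this through the set complement of Theorem \ref{theorem 32}, and read off $a^\perp_1=\beta$ and $a^\perp_{1+m}=\beta+2$ by strict monotonicity. Part 2 then matches the paper's computation, and your direct derivation of the reverse implication in part 3 (re-running the Theorem \ref{theorem 32} bookkeeping with the counting argument on $[\beta+2,n]$) is a legitimate substitute for the paper's appeal to Theorem \ref{Miguel}(1), since that theorem encodes exactly the same complementation argument.

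The genuine gap is that part 1 of the statement carries no divisibility hypothesis, and you prove it only for $m\mid t$. Your parenthetical ``the only case needed by Parts 2 and 3'' justifies restricting parts 2 and 3, but part 1 is an independent assertion. For $m\nmid t$, write $t=\beta m+\alpha$ with $1\le\alpha<m$: then $a_1(\mC)=n-\beta-1$, the $i$-MRD values become $a_{1+(i-1)m}(\mC)=n-\beta+i-1$ for $i\ge 2$, the indices named in the statement are $1+t-\lfloor t/m\rfloor m=1+\alpha$ and $1+\alpha+m$ rather than $1$ and $1+m$, and the left-hand indices produced by Theorem \ref{theorem 32} applied to $\mC^\perp$ are congruent to $1-\alpha$ rather than $1+\alpha$ modulo $m$. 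So the matching of the two smallest dual weights to the indices in the statement is not the same bookkeeping as in the $m\mid t$ case and has to be carried out separately; none of your displayed computations cover it. You should either supply this case explicitly or state up front that you only establish part 1 under $m\mid t$.
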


\begin{proof}
\begin{enumerate}
\item Let $\mC$ be a $2$-AMRD code. Then $a_1(\mC)=n-\lceil t/m\rceil$, $a_{1+m}(\mC)=n-\left \lfloor \frac{t-(1+m)}{m} \right\rfloor$ and $a_{1+m}(\mC)=1+a_1(\mC)$. We see that the sequence $$ a_{1+m}< a_{1+2m}< \ldots < a_{1+(\lceil t/m\rceil-1)m}=n $$ is exactly the sequence $$n-\left\lfloor \frac{t-(1+m)}{m} \right\rfloor < n-\left\lfloor \frac{t-(1+2m)}{m} \right\rfloor < \ldots< n-\left\lfloor \frac{t-(1+(\lceil t/m\rceil-1)m)}{m} \right\rfloor=n,$$ which has no gaps. By Theorem \ref{theorem 32}
$$\{n+1-a^{\perp}_{1+t+(n-\lceil \frac{2t+1}{m}\rceil)m}, \ldots, n+1-a^{\perp}_{1+t-\lfloor t/m\rfloor m}\}=[n] \backslash\{ a_1, \ldots, a_{1+(\lceil t/m\rceil-1)m}\}.$$
  Therefore $n+1-a^{\perp}_{1+t-\lfloor t/m\rfloor m}(\mC)=a_{1+m}(\mC)-1=n-\lceil t/m \rceil +2$ and $$n+1-a^{\perp}_{1+t-(\lfloor t/m\rfloor-1)m}(\mC)=a_1(\mC)-1=n- \lceil t/m\rceil-1.$$
Thus $a^{\perp}_{1+t-(\lfloor t/m\rfloor-1)m}(\mC)=\lceil t / m\rceil+ 2$ and $a^{\perp}_{1+t-\lfloor t/m\rfloor m}(\mC)=\lceil t /m \rceil$.
\item In this case $a_1(\mC^\perp)=t/m$ and $a_{1+m}(\mC)=t/m+2$. Therefore $\rdef(\mC^\perp)=1$ and $a_{1+m}(\mC^\perp)=n-\left\lfloor \frac{mn-t-(1+m)}{m} \right\rfloor.$
\item By Theorem \ref{Miguel} (1) we have that $\mC$ is dually AMRD with $t/m >1$ if and only if $\mC$ is AMRD and $a_{1+m}(\mC)=d+2$, which is equivalent to $\mC$ being AMRD and $a_{1+m}(\mC)=n-\lceil t/m \rceil+2=n-\lfloor \frac{t-(1+m)}{m}\rfloor$.

\end{enumerate}
\end{proof}

\begin{corollary} Let $C \leq \Fqmn$ be an $\Fqm$-linear $[n,k]$ code with minimum distance $d$ and dual minimum distance $d^\perp$.
The following facts hold:
\begin{enumerate}
\item If $C$ is $2$-AMRD, then $C^\perp$ is $2$-AMRD.
\item $C$ is a $2$-AMRD code if and only if $C$ is a dually AMRD code with dimension $k>1$.\\
\end{enumerate}

\end{corollary}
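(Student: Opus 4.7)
The plan is to derive this corollary directly via the Wei-type duality for generalized weights of $\Fqm$-linear codes (Theorem \ref{Ducoat}.5), mirroring the $\F_q$-linear argument of the preceding theorem. Alternatively, one could translate through the associated code $\lambda_{\mathcal{B}}(C)$ of $\F_q$-dimension $t=mk$, so that the hypothesis $m\mid t$ of the preceding theorem is automatic; Part 2 would then follow directly from its part 3, using the fact recorded in the preliminaries that $d(\lambda_{\mathcal{B}}(C)^\perp)=d(C^\perp)$ to transfer the dually AMRD property back and forth between the $\Fqm$-linear and the $\F_q$-linear settings.

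For Part 1 I would start from $\mathcal{M}_1(C)=n-k$ and $\mathcal{M}_2(C)=n-k+2$ (the content of being $2$-AMRD). Strict monotonicity of the sequence $(\mathcal{M}_r(C))_r$ together with the Singleton-type bound $\mathcal{M}_r(C)\leq n-k+r$ (parts 3 and 4 of Theorem \ref{Ducoat}) forces $\mathcal{M}_r(C)=n-k+r$ for every $r\geq 2$. Hence $\{\mathcal{M}_r(C) : 1\leq r\leq k\}=\{n-k\}\cup\{n-k+2,\ldots,n\}$, missing precisely $n-k+1$ inside $\{n-k,\ldots,n\}$. Applying the duality in Theorem \ref{Ducoat}.5, the complementary set in $[n]$ is $\{n+1-\mathcal{M}_j(C^\perp) : 1\leq j\leq n-k\}=\{1,2,\ldots,n-k-1,n-k+1\}$. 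Reading off its two largest elements (assuming $n-k\geq 2$) yields $\mathcal{M}_1(C^\perp)=k$ and $\mathcal{M}_2(C^\perp)=k+2$, so $C^\perp$ is both AMRD and $2$-MRD, hence $2$-AMRD.

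For Part 2, the forward direction is immediate from Part 1 together with the dimension requirement $k>1$ built into the definition of $2$-AMRD (one needs $\mathcal{M}_2(C)$ to exist). For the converse, assume $C$ is dually AMRD with $k>1$: then $\mathcal{M}_1(C)=n-k$ and $\mathcal{M}_1(C^\perp)=k$, so by Theorem \ref{Ducoat}.5 the value $n-k+1=n+1-\mathcal{M}_1(C^\perp)$ lies in $[n]\setminus\{\mathcal{M}_r(C)\}$. Combined with strict monotonicity and the Singleton-type bound, this forces $\mathcal{M}_2(C)=n-k+2$, so $C$ is $2$-MRD and therefore $2$-AMRD. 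The main obstacle to verify is the boundary case $n-k\leq 1$ in Part 1, in which $C^\perp$ cannot literally support a nontrivial $2$-AMRD structure; this has to be excluded by convention or checked to be vacuous under the stated hypotheses. Beyond that delicate point, the proof is a clean transcription of the $\mathcal{M}_r$-duality argument already used in the preceding $\F_q$-linear theorem.
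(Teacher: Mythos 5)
Your proposal is correct, and it actually contains two routes. The first (pass to the associated code $\lambda_{\mathcal{B}}(C)$, where $t=mk$ makes the hypothesis $m\mid t$ automatic and $t/m=k$, then quote the preceding $\F_q$-linear theorem) is precisely how the paper obtains this corollary; the paper gives no separate proof beyond that reduction. Your second, direct route is genuinely different in execution: you work entirely with the $\Fqm$-linear generalized weights $\mathcal{M}_r(C)$ and the Wei-type duality of Theorem \ref{Ducoat}(5), showing that $2$-AMRD forces $\{\mathcal{M}_r(C)\}=\{n-k\}\cup\{n-k+2,\dots,n\}$ and reading off $\mathcal{M}_1(C^\perp)=k$, $\mathcal{M}_2(C^\perp)=k+2$ from the complementary set, and conversely extracting $\mathcal{M}_2(C)=n-k+2$ from the fact that $n-k+1=n+1-\mathcal{M}_1(C^\perp)$ is excluded from the weight set. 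This is the same duality mechanism the paper uses in the $a_r$-language for the $\F_q$-linear theorem, but your version is self-contained and bypasses the chain Theorem \ref{Mariangel} $\to$ Theorem \ref{consequence} $\to$ Theorem \ref{Miguel} on which the paper's part 3 of the preceding theorem rests; what you lose is that the argument does not generalize to $\F_q$-linear codes with $m\nmid t$, which the paper's machinery is built to handle. Your flag on the boundary case $n-k=1$ in Part 1 (where $C^\perp$ is one-dimensional and $\mathcal{M}_2(C^\perp)$ does not exist, so ``$2$-AMRD'' is not literally meaningful for the dual) is a real edge case, but it is inherited from the statement itself --- the paper's version has the same issue --- so it is not a defect of your argument.
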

\textbf{Acknowledgement.} The author is very grateful to Professor Joachim Rosenthal for his support and hospitality at the University of Zurich.


\begin{thebibliography}{}
\bibitem{DC-G-R-L} J. de la Cruz, E. Gorla, H. López, A. Ravagnani, Rank distribution of Delsarte codes, to appear in \textit{Designs, Codes and Cryptography}. Online preprint: \url{https://arxiv.org/abs/1510.01008.}
\bibitem{Willems-DelaCruz-K-W} J. de la Cruz, M. Kiermaier, A. Wassermann and W. Willems,  \textit{Algebraic structures of MRD Codes}, Advances in Mathematics of Communications (AMC), \textbf{10} (2016) 499-510.
\bibitem{Delsarte} P. Delsarte, Bilinear forms over a finite field, with applications to coding theory,
\textit{J. Combin. Theory Ser.} A 25 (1978) 226-241.
\bibitem{Dodunekov} S. M. Dodunekov and I. N. Landgev, On near-MDS codes, \textit{report LiTH-ISY-R-1563, Dept. of Electrical Engineering, Linköping University} (1994).
\bibitem{Ducoat} J. Ducoat, Generalized rank weights: duality and Griesmer bound, Preprint available at \url{http://arxiv.org/pdf/1306.3899v1.pdf.}
\bibitem{Ducoat2} J. Ducoat and F. Oggier, Rank weight hierarchy of some classes of cyclic codes, \textit{Information Theory Workshop (ITW), 2014 IEEE} (2014) 142-146.
\bibitem{Faldum-Willems} A. Faldum and W. Willems, Codes of small defect, \textit{Designs, Codes and Cryptography}, 10 (1997) 341-350.

\bibitem{Gabidulin} E. M. Gabidulin, Theory of codes with maximum rank distance, \textit{Probl. Inf. Transm.}, V. \textbf{21} (1985) 1-12.
\bibitem{KK1} R. K\"{o}tter, F. R. Kschischang, Coding for Errors and
 Erasures in Random Network
 Coding. \textit{IEEE Transactions on Information Theory}, 54 (2008), 8, 3579-
 3591.
\bibitem{Kurihara-Matsumoto-Uyematsu} J. Kurihara, R. Matsumoto, and T. Uyematsu, Relative generalized rank weight of linear codes and its applications to network coding, \textit{Transactions on Information Theory}, 61 (2015), 7, 3912-3936.
\bibitem{Nebe-Willems} G. Nebe and W. Willems, On self-dual MRD codes,\textit{ Advances in
Mathematics of Communication}, 10 (2016), 633 - 642.
\bibitem{Ogg} F. Oggier and A. Sboui A, On the existence of generalized rank weights, \textit{in Proc. 2012 Int. Symp. Information Theory and Its Applications, Honolulu, Hawaii, USA, Oct. 2012,} (2012) 406-410.


\bibitem{Alberto} A. Ravagnani, Rank-metric codes and their MacWilliams identities, \textit{Designs, Codes and Cryptography},  80 (2016), 197-216.

\bibitem{a2} A. Ravagnani, Generalized weights: An anticode
approach, \textit{Journal of Pure and Applied Algebra}, 220 (2016), 1946-1962.

\bibitem{sheekey} J. Sheekey, A new family of linear maximum rank distance codes. Online preprint: \url{http://arxiv.org/abs/1504.01581}.

\bibitem{metrics} D. Silva, F. R. Kschishang, On metrics for error
correction in network coding.
\textit{IEEE Transactions on Information Theory}, 55 (2009), 12, 5479-5490.
\bibitem{metrics3} D. Silva, F. R. Kschischang, Universal Secure Network
Coding via Rank-Metric Codes. \textit{IEEE
Transactions on Information Theory}, 57 (2011), 2, 1124-1135.
\bibitem{Wei} V. K. Wei, Generalized Hamming weights for linear codes, \textit{IEEE Trans. Inf. Theory}, V. \textbf{37}, number 5  (1991) 1412-1418.


\end{thebibliography}
\end{document}